\theoremstyle{plain}
\def\Box{\vcenter{\vbox{\hrule\hbox{\vrule
     \vbox to 8.8pt{\hbox to 10pt{}\vfill}\vrule}\hrule}}}
\newcommand{\Ff}{{\mathbb F}}
\newcommand{\Zz}{{\mathbb Z}}
\newcommand{\Cc}{{\mathbb C}}
\newcommand\cc{{\mathcal C}}        %
\newtheorem{thm}{Theorem}[section]
\newtheorem{lem}[thm]{Lemma}
\newtheorem{remark}{Remark}
\begin{document}

%\setlength\linenumbersep{1cm}
 %\maketitle
%\thispagestyle{empty}
%

\title{MDS codes with arbitrary dimensional hull and their applications%\thanks{This work was supported by the National Natural Science Foundation of China (Grant No. 11371011 and 61572027).
}
%Grants or other notes
%about the article that should go on the front page should be
%placed here. General acknowledgments should be placed at the end of the article.

\author{Gaojun Luo$^1$, Xiwang Cao$^{1,2}$\thanks{This work was supported by the National Natural Science Foundation of China (Grant No. 11771007 and 61572027).}}

%\date{Received: date / Accepted: date}
% The correct dates will be entered by the editor
\maketitle

\let\thefootnote\relax\footnotetext{$^1$Department of Math, Nanjing University of Aeronautics and Astronautics, Nanjing 211100, China. \\(Email: gjluo1990@163.com)}

\let\thefootnote\relax\footnotetext{$^2$State Key Laboratory of Information Security, Institute of Information Engineering, Chinese Academy of Sciences, Beijing 100093, China. (Email: xwcao@nuaa.edu.cn)}

\begin{abstract}
The hull of linear codes have promising utilization in coding theory and quantum coding theory. In this paper, we study the hull of generalized Reed-Solomon codes and extended generalized Reed-Solomon codes over finite fields with respect to the Euclidean inner product. Several infinite families of MDS codes with arbitrary dimensional hull are presented. As an application, using these MDS codes with arbitrary dimensional hull, we construct several new infinite families of entanglement-assisted quantum error-correcting codes with flexible parameters.

{\bf Keywods}: Hull, generalized Reed-Solomon code, MDS code, Entanglement-assisted quantum error-correcting code (EAQECC)

\end{abstract}

\section{Introduction}
Let $q$ be a power of a prime and $\Ff_q$ denote the finite field with $q$ elements. An $[n,k,d]$ linear code over $\Ff_q$ is a $k$-dimensional subspace of $\Ff_q^n$ with minimum Hamming distance $d$. Let $\Ff_q^n$ stand for the vector space with dimension $n$ over $\Ff_q$. Maximum distance separable (MDS) codes are optimal in the sense that no code of length $n$ with $K$ codewords has a larger minimum distance than that of a MDS code with length $n$ and size $K$. Mathematically, an $[n,k,d]$ code $\cc$ is called a MDS code if $n=k+d-1$. For any two vectors $\mathbf{x}=(x_1,x_2,\cdots,x_n)^t$ and $\mathbf{y}=(y_1,y_2,\cdots,y_n)^t$ of $\Ff_q^n$, their Euclidean inner product is defined as
$$
\mathbf{x}\cdot\mathbf{y}=\sum_{i=1}^nx_iy_i.
$$
The dual of the code $\cc$ is defined by the set
$$
\cc^{\bot}=\{\mathbf{x}\in\Ff_q^n:\mathbf{x}\cdot\mathbf{y}=0\ {\rm for\ all }\ \ \mathbf{y}\in\cc\}.
$$
The hull of $\cc$ is the code $\cc\cap \cc^{\bot}$, denoted by Hull$(\cc)$, in the terminology that was introduced in \cite{Hull}. If Hull$(\cc)=\{0\}$, then the linear code $\cc$ is termed a linear complementary dual (LCD) code. Recently, the study of LCD codes has attracted much attention due to their applications in orthogonal direct sum masking (ODSM), protecting against side-channel attacks (SCAs) and fault injection attacks (FIAs). The existence question about MDS codes with complementary duals over $\Ff_q$ has been completely addressed in \cite{Carlet} and \cite{Jin} for $q=2$ and $q>3$, respectively.

The test of the permutation equivalence of two codes and the determination of the automorphism group of a linear code are interesting problems in coding theory \cite{A1,A2}. Some algorithms for these computations have been provided in \cite{A3,A4,A5,A6}. The complexity of these algorithms is determined by the dimension of the hull of codes. Consequently, the study of the dimension and properties of hull of codes is useful for these computations. In \cite{R1}, Sendrier established the number of distinct $q$-ary linear codes of length $n$ with a given dimensional hull. Skersys \cite{R2} discussed the average dimension of the hull of cyclic codes. Recently, Sangwisut et al. \cite{R3} have studied the hull of cyclic and negacyclic codes over finite fields.

In quantum information, the existence of quantum error correcting codes (QECCs) was one of the most important discoveries in 1995. Afterwards, Calderbank, Shor and Steane \cite{CSS1,CSS2} provided a method for constructing QECCs (namely, the CSS construction), which establishes the connections between quantum stabilizer codes and classical linear codes. In the CSS construction for quantum codes, the classical linear codes need to be dual-containing, otherwise the resulting ``stabilizer'' group is not commuting, and thus has no code space. In other words, one can not construct a quantum code by a classical linear code that is not dual-containing. In order to avoid this problem, Hsieh et al. \cite{C1} introduced a simple and fundamental class of quantum codes called entanglement-assisted quantum error correcting codes (EAQECCs). By relaxing the duality condition and using pre-shared entanglement between the sender and receiver, one can construct quantum codes from any classical linear codes. However, in general, the determination of the number of shared pairs that required to construct an EAQECC is not a easy thing. Guenda et al. \cite{EAQ} proved that this number can be evaluated by the dimension of the hull of classical linear codes. More precisely, given a classical linear code with the determined dimensional hull, one can obtain an EAQECC. For more details on EAQECCs, we refer the reader to \cite{QIP}. Therefore, the study of hull of linear codes is significant.

The purpose of this paper is to construct linear codes from generalized Reed-Solomon (GRS) codes or extended generalized Reed-Solomon codes and determine their hull. Inspired by the idea of \cite{Chenbc}, we propose several constructions of MDS codes with arbitrary dimensional hull. Furthermore, by using these MDS codes with arbitrary dimensional hull, we obtain several new infinite families of MDS EAQECCs.

This paper is organized as follows. In Section 2, we briefly recall some definitions and results about GRS codes and extended GRS codes. In Section 3, we present our constructions of MDS codes with arbitrary dimensional hull. In Section 4, we propose several families of MDS EAQECCs. In Section 5, we make a conclusion.

\section{Preliminaries}
In this section, we briefly recall some definitions and results about generalized Reed-Solomon codes, which will be employed in our discussion.

Let $q$ be a prime power and $\Ff_q$ denote the finite field with $q$ elements. We write $\Ff_q^*=\Ff_q\setminus\{0\}$. Assume that $\alpha_1,\alpha_2,\cdots,\alpha_n$ are $n$ distinct elements of $\Ff_q$, where $1<n\leq q$. For $n$ nonzero fixed elements $v_1,v_2,\cdots,v_n$ of $\Ff_q$ ($v_i$ may not be distinct), the GRS code associated with $\mathbf{a}=(\alpha_1,\alpha_2,\cdots,\alpha_n)$ and $\mathbf{v}=(v_1,v_2,\cdots,v_n)$ is defined as follows:
\begin{equation}\label{eq1}
GRS_k(\mathbf{a},\mathbf{v})=\{(v_1f(\alpha_1),v_2f(\alpha_2),\cdots,v_nf(\alpha_n)):f(x)\in\Ff_q[x],{\rm deg}(f(x))\leq k-1\}.
\end{equation}
A generator matrix of $GRS_k(\mathbf{a},\mathbf{v})$ is given by
$$
G=\left(\begin{matrix}
v_1 & v_2 &\cdots & v_n&\\
v_1\alpha_1 & v_2\alpha_2 &\cdots & v_n\alpha_n&\\
v_1\alpha_1^2 & v_2\alpha_2^2 &\cdots & v_n\alpha_n^2&\\
\vdots& \vdots &  \ddots & \vdots&\\
v_1\alpha_1^{k-1} & v_2\alpha_2^{k-1} &\cdots & v_n\alpha_n^{k-1}&\\
\end{matrix}\right).
$$
It is well known that the code $GRS_k(\mathbf{a},\mathbf{v})$ is a $q$-ary $[n,k,n-k+1]$-MDS code \cite[Th. 9.1.4]{Mac} and the dual of a GRS code is again a GRS code. More specifically,
$$
GRS_k(\mathbf{a},\mathbf{v})^{\bot}=GRS_{n-k}(\mathbf{a},\mathbf{v}^{\prime})
$$
for some $\mathbf{v}^{\prime}=(v_1^{\prime},v_2^{\prime},\cdots,v_n^{\prime})$ such that $v_i^{\prime}\neq 0$ for any $1\leq i\leq n$.

It is obvious that GRS codes exist for any length $n\leq q$ and any dimension $k\leq n$. GRS codes of length $q$ can be extended to the extended generalized Reed-Solomon codes with length $q+1$. Precisely speaking, the extended GRS code of length $q+1$ associated with $\mathbf{a}=(\alpha_1,\alpha_2,\cdots,\alpha_q)$ and $\mathbf{v}=(v_1,v_2,\cdots,v_q)$ is defined by
\begin{equation}\label{eq2}
GRS_k(\mathbf{a},\mathbf{v},\infty)=\{(v_1f(\alpha_1),v_2f(\alpha_2),\cdots,v_qf(\alpha_q),f_{k-1}):f(x)\in\Ff_q[x],{\rm deg}(f(x))\leq k-1\},
\end{equation}
where $\Ff_q=\{\alpha_1,\alpha_2,\cdots,\alpha_q\}$, $v_i\neq 0$ for all $1\leq i\leq q$ and $f_{k-1}$ is the coefficient of $x^{k-1}$ in $f(x)$. The extended GRS codes preserve the MDS property and $GRS_k(\mathbf{a},\mathbf{v},\infty)$ is a $q$-ary $[q+1,k,q-k+2]$ MDS code \cite{Mac}. A generator matrix of $GRS_k(\mathbf{a},\mathbf{v},\infty)$ is
$$
G=\left(\begin{matrix}
v_1 & v_2 &\cdots & v_n&0&\\
v_1\alpha_1 & v_2\alpha_2 &\cdots & v_n\alpha_n&0&\\
v_1\alpha_1^2 & v_2\alpha_2^2 &\cdots & v_n\alpha_n^2&0&\\
\vdots& \vdots &  \ddots & \vdots&\vdots&\\
v_1\alpha_1^{k-1} & v_2\alpha_2^{k-1} &\cdots & v_n\alpha_n^{k-1}&1&\\
\end{matrix}\right).
$$

In order to determine the hull of a GRS code, we need the following two lemmas.
\begin{lem}\label{lem1}{\rm \cite{Chenbc}}
Assume that $GRS_k(\mathbf{a},\mathbf{v})$ defined by $(\ref{eq1})$ is the GRS code associated with $\mathbf{a}$ and $\mathbf{v}$. For a codeword $\mathbf{c}=(v_1f(\alpha_1),v_2f(\alpha_2),\cdots,v_nf(\alpha_n))$ of $GRS_k(\mathbf{a},\mathbf{v})$, $\mathbf{c}$ is contained in $GRS_k(\mathbf{a},\mathbf{v})^{\bot}$ if and only if there is a polynomial $g(x)\in\Ff_q[x]$ with deg$(g(x))\leq n-k-1$ such that
$$
(v_1^2f(\alpha_1),v_2^2f(\alpha_2),\cdots,v_n^2f(\alpha_n))=(u_1g(\alpha_1),u_2g(\alpha_2),\cdots,u_ng(\alpha_n)),
$$
where $u_i=\prod_{1\leq j\leq n,j\neq i}(\alpha_i-\alpha_j)^{-1}$ for $1\leq i\leq n$.
\end{lem}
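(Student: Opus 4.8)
The crux of the argument is to make the paper's statement $GRS_k(\mathbf{a},\mathbf{v})^{\bot}=GRS_{n-k}(\mathbf{a},\mathbf{v}')$ completely explicit by pinning down the evaluation vector $\mathbf{v}'=(v_1',\ldots,v_n')$. Once $v_i'$ is known in terms of $v_i$ and $u_i$, the claimed characterization follows by a one-line comparison of the two GRS representations of the codeword $\mathbf{c}$. The plan is therefore: (i) determine $\mathbf{v}'$ explicitly; (ii) rephrase ``$\mathbf{c}$ lies in the dual'' as ``$\mathbf{c}$ is a GRS codeword with evaluation vector $\mathbf{v}'$''; and (iii) equate the two coordinate expressions.

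For step (i), I would introduce the locator polynomial $L(x)=\prod_{i=1}^n(x-\alpha_i)$, so that $L'(\alpha_i)=\prod_{j\neq i}(\alpha_i-\alpha_j)$ and hence $u_i=1/L'(\alpha_i)$. Writing the generator matrix of $GRS_{n-k}(\mathbf{a},\mathbf{v}')$ with rows $(v_1'\alpha_1^s,\ldots,v_n'\alpha_n^s)$ for $0\le s\le n-k-1$, orthogonality of this code to $GRS_k(\mathbf{a},\mathbf{v})$ is equivalent to
$$
\sum_{i=1}^n v_iv_i'\,\alpha_i^{m}=0,\qquad 0\le m\le n-2,
$$
since the exponents $m=s+t$ with $0\le t\le k-1$ exhaust exactly this range. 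The standard Lagrange-interpolation identity $\sum_{i=1}^n \alpha_i^{m}/L'(\alpha_i)=0$ for $0\le m\le n-2$ then shows that the vector $(v_iv_i')_i$ spans the same one-dimensional solution space as $(u_i)_i$, so $(v_iv_i')_i=\lambda(u_i)_i$ for a nonzero scalar $\lambda$. Rescaling $\mathbf{v}'$ (which does not change the dual code, as scaling every coordinate of the evaluation vector by a common constant scales each codeword by that constant and hence fixes the subspace) lets me take $v_iv_i'=u_i$, i.e. $v_i'=u_i/v_i$.

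With $\mathbf{v}'$ in hand, steps (ii)--(iii) are immediate. By definition $\mathbf{c}\in GRS_k(\mathbf{a},\mathbf{v})^{\bot}=GRS_{n-k}(\mathbf{a},\mathbf{v}')$ holds if and only if there is a polynomial $g(x)$ with $\deg(g)\le n-k-1$ such that $\mathbf{c}=(v_1'g(\alpha_1),\ldots,v_n'g(\alpha_n))$. Equating this with the given expression $\mathbf{c}=(v_1f(\alpha_1),\ldots,v_nf(\alpha_n))$ coordinatewise gives $v_if(\alpha_i)=v_i'g(\alpha_i)$; multiplying by $v_i$ and using $v_iv_i'=u_i$ turns this into $v_i^2f(\alpha_i)=u_ig(\alpha_i)$ for every $i$, which is precisely the asserted condition, and every step is reversible, so the equivalence holds in both directions. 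The only genuinely non-formal ingredient is step (i): the identification $v_iv_i'=u_i$ rests on the interpolation identity for $\sum_i\alpha_i^m/L'(\alpha_i)$ and on the freedom to absorb the overall scalar $\lambda$, so I would state that identity cleanly and justify the rescaling carefully, after which the remainder is purely a bookkeeping comparison of coordinates.
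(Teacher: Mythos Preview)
Your argument is correct and is essentially the standard proof of this fact. Note, however, that the paper does not supply its own proof of this lemma: it is quoted from \cite{Chenbc} and stated without proof, so there is no ``paper's proof'' to compare against. Your route---identifying the dual explicitly as $GRS_{n-k}(\mathbf{a},\mathbf{v}')$ with $v_i'=u_i/v_i$ via the Lagrange identity $\sum_i \alpha_i^m/L'(\alpha_i)=0$ for $0\le m\le n-2$, and then reading off the coordinate condition---is exactly the classical derivation. The one place you could tighten is the passage ``spans the same one-dimensional solution space'': you should say explicitly that the $(n-1)\times n$ Vandermonde-type matrix $(\alpha_i^m)_{0\le m\le n-2,\,1\le i\le n}$ has rank $n-1$ because the $\alpha_i$ are distinct, so its kernel is one-dimensional; that is what forces $(v_iv_i')=\lambda(u_i)$ rather than merely placing both vectors in a common kernel.
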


\begin{lem}\label{lem2}{\rm \cite{Chenbc}}
Assume that $GRS_k(\mathbf{a},\mathbf{v},\infty)$ defined by $(\ref{eq2})$ is the extended GRS code relative to $\mathbf{a}$ and $\mathbf{v}$. For a codeword $\mathbf{c}=(v_1f(\alpha_1),v_2f(\alpha_2),\cdots,v_qf(\alpha_q),f_{k-1})$ of $GRS_k(\mathbf{a},\mathbf{v},\infty)$, $\mathbf{c}$ is contained in $GRS_k(\mathbf{a},\mathbf{v},,\infty)^{\bot}$ if and only if there is a polynomial $g(x)\in\Ff_q[x]$ with deg$(g(x))\leq q-k$ such that
$$
(v_1^2f(\alpha_1),v_2^2f(\alpha_2),\cdots,v_q^2f(\alpha_q),f_{k-1})=(g(\alpha_1),g(\alpha_2),\cdots,g(\alpha_n),g_{q-k}),
$$
where $g_{q-k}$ stands for the coefficient of $x^{q-k}$ of $g(x)$.
\end{lem}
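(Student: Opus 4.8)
The plan is to reduce the membership condition to an explicit description of the dual code, exactly as one does for the non-extended case treated in Lemma~\ref{lem1}. First I would show that the dual of the extended code is again an extended generalized Reed-Solomon code, namely that
$$GRS_k(\mathbf{a},\mathbf{v},\infty)^{\perp}=GRS_{q+1-k}(\mathbf{a},\mathbf{v}',\infty)\quad\text{with}\quad v_i'=v_i^{-1},$$
and then read off the stated equality by comparing a given codeword with a generic element of this dual code.

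To establish the displayed identity I would take an arbitrary codeword $(v_1f(\alpha_1),\dots,v_qf(\alpha_q),f_{k-1})$ with $\deg f\le k-1$ together with a candidate dual word $(v_1^{-1}g(\alpha_1),\dots,v_q^{-1}g(\alpha_q),g_{q-k})$ with $\deg g\le q-k$, and compute their Euclidean inner product. The weights cancel, leaving $\sum_{i=1}^q f(\alpha_i)g(\alpha_i)+f_{k-1}g_{q-k}$. Since $\{\alpha_1,\dots,\alpha_q\}=\Ff_q$, the first sum equals $\sum_{\alpha\in\Ff_q}(fg)(\alpha)$, and the crucial input is the power-sum identity $\sum_{\alpha\in\Ff_q}\alpha^m=0$ for $0\le m\le q-2$ and $\sum_{\alpha\in\Ff_q}\alpha^{q-1}=-1$. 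Because $\deg(fg)\le q-1$, only the top coefficient of $fg$ survives, and that coefficient is precisely the product of leading coefficients $f_{k-1}g_{q-k}$; hence $\sum_{\alpha\in\Ff_q}(fg)(\alpha)=-f_{k-1}g_{q-k}$ and the inner product vanishes. This shows $GRS_{q+1-k}(\mathbf{a},\mathbf{v}',\infty)\subseteq GRS_k(\mathbf{a},\mathbf{v},\infty)^{\perp}$, and since both codes have dimension $q+1-k$ inside $\Ff_q^{q+1}$, they coincide.

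With the dual identified, the lemma follows immediately: a codeword $\mathbf{c}=(v_1f(\alpha_1),\dots,v_qf(\alpha_q),f_{k-1})$ lies in $GRS_k(\mathbf{a},\mathbf{v},\infty)^{\perp}$ if and only if it can be written as $(v_1^{-1}g(\alpha_1),\dots,v_q^{-1}g(\alpha_q),g_{q-k})$ for some $g$ with $\deg g\le q-k$, that is, $v_if(\alpha_i)=v_i^{-1}g(\alpha_i)$ for $1\le i\le q$ and $f_{k-1}=g_{q-k}$; multiplying the first set of equations by $v_i$ yields the claimed vector equality. I expect the main obstacle to be the bookkeeping around the adjoined coordinate: one must verify that the leading-coefficient slot $g_{q-k}$ of the dual lines up exactly with $f_{k-1}$ and that the power-sum cancellation leaves no stray scalar. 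Everything here hinges on the evaluation points being all of $\Ff_q$, which is what makes the power sums collapse cleanly; for a proper subset of $\Ff_q$ the dual weights would instead involve the products $u_i=\prod_{j\ne i}(\alpha_i-\alpha_j)^{-1}$ as in Lemma~\ref{lem1}.
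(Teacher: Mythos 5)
Your proof is correct. Note that the paper itself states this lemma without proof, citing \cite{Chenbc}; your argument --- identifying $GRS_k(\mathbf{a},\mathbf{v},\infty)^{\perp}$ as $GRS_{q+1-k}(\mathbf{a},\mathbf{v}',\infty)$ with $v_i'=v_i^{-1}$ via the power-sum identity $\sum_{\alpha\in\Ff_q}\alpha^m=0$ for $0\le m\le q-2$ and $=-1$ for $m=q-1$, so that $\sum_{\alpha\in\Ff_q}(fg)(\alpha)=-f_{k-1}g_{q-k}$ exactly cancels the appended coordinate, followed by a dimension count --- is the standard proof given in that reference, and all the steps (in particular that the only contribution to the coefficient of $x^{q-1}$ in $fg$ is $f_{k-1}g_{q-k}$) check out.
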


\section{Constructions of MDS codes with arbitrary dimensional hull}
In this section, utilizing generalized Reed-Solomon codes, we provide several constructions of MDS codes with arbitrary dimensional hull. As pointed out in the first section, the existence question about MDS codes with complementary duals over $\Ff_q$ has been completely addressed in \cite{Carlet} and \cite{Jin}. Here we only consider the case that the dimension of hull is greater than 0. By the definition of GRS codes, without loss of generality, we always restrict ourself to $k$-dimensional codes of length $n$ with $1<k\leq \lfloor n/2\rfloor$ in the sequel.

We first construct MDS codes with arbitrary dimensional hull over a finite field of even characteristic.

\begin{thm}\label{thm1}
Let $m>1$ be an integer and $q=2^m$. If $1<n\leq q$, then there exists a binary $[n,k]$ MDS code with $l$-dimensional hull for any $1\leq l \leq k$.
\end{thm}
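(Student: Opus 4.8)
The plan is to exploit Lemma~\ref{lem1} together with the fact that in characteristic two the squaring map $x\mapsto x^2$ is a bijection of $\Ff_q$, so that the squares $v_i^2$ may be prescribed to be any nonzero values we wish. Concretely, I would fix $n$ distinct elements $\alpha_1,\dots,\alpha_n\in\Ff_q$, set $u_i=\prod_{j\ne i}(\alpha_i-\alpha_j)^{-1}$ as in Lemma~\ref{lem1}, and choose a polynomial $w(x)\in\Ff_q[x]$ with $w(\alpha_i)\ne 0$ for every $i$. Taking $v_i=(u_i\,w(\alpha_i))^{2^{m-1}}$ yields nonzero elements with $v_i^2=u_i\,w(\alpha_i)$, so $GRS_k(\mathbf a,\mathbf v)$ is an $[n,k,n-k+1]$ MDS code. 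By Lemma~\ref{lem1}, the codeword attached to $f$ with $\deg f\le k-1$ lies in the hull if and only if there is $g$ with $\deg g\le n-k-1$ and $w(\alpha_i)f(\alpha_i)=g(\alpha_i)$ for all $i$; equivalently, writing $\overline{wf}$ for the reduction of $wf$ modulo $\prod_i(x-\alpha_i)$, if and only if $\deg\overline{wf}\le n-k-1$. Since $f\mapsto(v_if(\alpha_i))_i$ is injective for $\deg f\le k-1$, the hull dimension equals $\dim\{f:\deg f\le k-1,\ \deg\overline{wf}\le n-k-1\}$.

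The main step is to choose $w$ so that this dimension is exactly $l$. I would take $w$ of a suitable degree $\delta$ that does not vanish at any $\alpha_i$, so that no reduction occurs and $\deg(wf)=\delta+\deg f$ for every $f$. The hull condition then reads $\deg f\le n-k-1-\delta$, whence the hull dimension equals $\min(k,\,n-k-\delta)$. For $l=k$ one takes $w$ a nonzero constant ($\delta=0$); for $1\le l\le k-1$ one sets $\delta=n-k-l$, which satisfies $1\le\delta\le n-k-1$ because $k\le\lfloor n/2\rfloor$, and then $\min(k,\,n-k-\delta)=n-k-\delta=l$ as required.

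What remains is the existence of a nonvanishing $w$ of the prescribed degree $\delta$. When $\delta\ge 2$ I would take $w$ to be a product of irreducible polynomials over $\Ff_q$ of degrees at least two summing to $\delta$ (possible for every $\delta\ge2$, writing $\delta$ as a sum of $2$'s and $3$'s); such a $w$ has no root in $\Ff_q$ at all, so $w(\alpha_i)\ne0$ holds automatically, even when $n=q$. When $\delta=1$ and $n<q$, the linear factor $w(x)=x-\beta$ with $\beta\in\Ff_q\setminus\{\alpha_1,\dots,\alpha_n\}$ does the job. One checks that $\delta=1$ forces $n=2k$, so this covers all cases except the single boundary situation described next.

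I expect the only genuine obstacle to be the boundary case $\delta=1$ with $n=q$, where $\{\alpha_1,\dots,\alpha_n\}=\Ff_q$ and no linear polynomial avoids all the $\alpha_i$; here the constraints force $q=2k$ and $l=k-1$. For this case I would abandon the low-degree $w$ and instead take $w(x)=c+x^{q-1}$ with $c\in\Ff_q\setminus\{0,1\}$, which exists since $q\ge4$ and is nonvanishing on $\Ff_q$ because $w(0)=c$ and $w(\alpha)=c+1$ for $\alpha\ne0$. Reducing modulo $x^q-x$ gives $\overline{wf}=cf_0+\sum_{1\le j\le k-1}(c+1)f_jx^{j}+f_0x^{q-1}$ for $f=\sum_j f_jx^j$, so the only coefficient in degrees $\ge n-k=k$ is that of $x^{q-1}$, equal to $f_0$; hence $\deg\overline{wf}\le n-k-1$ precisely when $f_0=0$, producing a hull of dimension $k-1=l$. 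Verifying this reduction identity and confirming that the degree count leaves no $l$ uncovered is the technical heart of the argument.
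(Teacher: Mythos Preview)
Your proof is correct and takes a genuinely different route from the paper's. Both arguments rest on Lemma~\ref{lem1} together with the fact that squaring is a bijection on $\Ff_q$ in characteristic two, so that the values $v_i^2$ may be prescribed arbitrarily; the difference lies in how the hull dimension is then tuned. The paper perturbs only the first $s$ coordinates by a fixed scalar $a\neq1$, setting $v_i^2=a^2u_i$ for $i\le s$ and $v_i^2=u_i$ for $i>s$; comparing $f$ and $g$ on the $n-s\ge n-k+1$ unperturbed coordinates forces $f=g$, after which the perturbed coordinates give $(a^2-1)f(\alpha_i)=0$, hence $f(\alpha_i)=0$ for $i\le s$, and the hull dimension is $k-s$. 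You instead scale every ratio $v_i^2/u_i$ by the value at $\alpha_i$ of a single polynomial $w$ and read off the hull dimension directly as $\min(k,n-k-\deg w)$. The paper's route is uniform---one short argument covers all $1\le l\le k-1$ with no boundary case---whereas your approach is slightly more structural (the hull condition becomes a clean degree bound on $wf$) but forces the separate treatment of $(\delta,n)=(1,q)$ via $w(x)=c+x^{q-1}$. Both are valid; the paper's is a little shorter, yours makes the degree mechanism explicit.
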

\begin{proof}
We first show the existence of MDS codes with $l$-dimensional hull for any $1\leq l \leq k-1$.

Suppose that $\alpha_1,\alpha_2,\cdots,\alpha_n$ are $n$ distinct elements of $\Ff_q$, where $1<n\leq q$. Put $u_i=\prod_{1\leq j\leq n,j\neq i}(\alpha_i-\alpha_j)^{-1}$ for $1\leq i\leq n$. By the Frobenius transform of $\Ff_q$, for each $u_i$, there exists a unique element $v_i$ of $\Ff_q$ such that $v_i^2=u_i$, where $1\leq i\leq n$. Let $a\neq 1\in\Ff_q^*$ and $s$ be an integer with $1\leq s\leq k-1$. Take $\mathbf{a}=(\alpha_1,\alpha_2,\cdots,\alpha_n)$ and $\mathbf{v}=(av_1,av_2,\cdots,av_s,v_{s+1},\cdots,v_n)$. Then we obtain a $q$-ary GRS code $GRS_k(\mathbf{a},\mathbf{v})$ of length $n$ associated with $\mathbf{a}$ and $\mathbf{v}$ as follows
$$
GRS_k(\mathbf{a},\mathbf{v})=\{(av_1f(\alpha_1),\cdots,av_sf(\alpha_s),v_{s+1}f(\alpha_{s+1}),\cdots,v_nf(\alpha_n)):f(x)\in\Ff_q[x],{\rm deg}(f(x))\leq k-1\}.
$$
Assume that
$$
(av_1f(\alpha_1),\cdots,av_sf(\alpha_s),v_{s+1}f(\alpha_{s+1}),\cdots,v_nf(\alpha_n))
$$
is an arbitrary element of $GRS_k(\mathbf{a},\mathbf{v})\cap GRS_k(\mathbf{a},\mathbf{v})^{\bot}$. It follows from Lemma \ref{lem1} that there is a polynomial $g(x)\in\Ff_q[x]$ with deg$(g(x))\leq n-k-1$ such that
$$
(a^2v_1^2f(\alpha_1),\cdots,a^2v_s^2f(\alpha_2),v_{s+1}^2f(\alpha_{s+1}),\cdots,v_n^2f(\alpha_n))=(u_1g(\alpha_1),u_2g(\alpha_2),\cdots,u_ng(\alpha_n)).
$$
Note that $v_i^2=u_i$ for any $1\leq i\leq n$. Hence,
\begin{equation}\label{eq11}
(a^2u_1f(\alpha_1),\cdots,a^2u_sf(\alpha_2),u_{s+1}f(\alpha_{s+1}),\cdots,u_nf(\alpha_n))=(u_1g(\alpha_1),u_2g(\alpha_2),\cdots,u_ng(\alpha_n)).
\end{equation}
The last $n-s$ coordinates of $(\ref{eq11})$ imply that $f(\alpha_i)=g(\alpha_i)$ for $s<i\leq n$. Due to $k\leq \lfloor n/2\rfloor$, we get that deg$(f(x))\leq k-1\leq n-k-1$ and deg$(g(x))\leq n-k-1$. Since $1\leq s\leq k-1$, we have $f(x)=g(x)$ for any $x\in\Ff_q$. Considering the first $s$ coordinates of $(\ref{eq11})$, we deduce that
$$
a^2u_if(\alpha_i)=a^2u_ig(\alpha_i)=u_ig(\alpha_i),
$$
for any $1\leq i\leq s$. It follows from $a\neq 1$ that $g(\alpha_i)=0$. Precisely speaking, $g(x)$ has at least $s$ distinct roots. Observe that deg$(g(x))\leq k-1$ which implies that
$$
g(x)=h(x)\prod_{i=1}^s(x-\alpha_i),\ \ h(x)\in\Ff_q[x],\ \  {\rm deg}(h(x))\leq k-1-s.
$$
For any $g(x)\in\Ff_q[x]$ of the form $g(x)=h(x)\prod_{i=1}^s(x-\alpha_i)$, where ${\rm deg}(h(x))\leq k-1-s$, there exists a $f(x)=g(x)=h(x)\prod_{i=1}^s(x-\alpha_i)$ such that
$$
(a^2v_1^2f(\alpha_1),\cdots,a^2v_s^2f(\alpha_2),v_{s+1}^2f(\alpha_{s+1}),\cdots,v_n^2f(\alpha_n))=(u_1g(\alpha_1),u_2g(\alpha_2),\cdots,u_ng(\alpha_n)),
$$
which implies that
$$
(av_1f(\alpha_1),\cdots,av_sf(\alpha_s),v_{s+1}f(\alpha_{s+1}),\cdots,v_nf(\alpha_n))\in GRS_k(\mathbf{a},\mathbf{v})\cap GRS_k(\mathbf{a},\mathbf{v})^{\bot}.
$$
Therefore, the dimension of Hull$(GRS_k(\mathbf{a},\mathbf{v}))$ is $k-s$.

Next, we prove that there is a binary $[n,k]$ MDS code with $k$-dimensional hull. Let the symbols be the same as above. Take $\mathbf{a}=(\alpha_1,\alpha_2,\cdots,\alpha_n)$ and $\mathbf{v}=(v_1,v_2,\cdots,v_n)$. Consider the GRS code $GRS_k(\mathbf{a},\mathbf{v})$ of length $n$ as follows
$$
GRS_k(\mathbf{a},\mathbf{v})=\{(v_1f(\alpha_1),\cdots,v_nf(\alpha_n)):f(x)\in\Ff_q[x],{\rm deg}(f(x))\leq k-1\}.
$$
By the method analogous to that used above, we can verify that dim$({\rm Hull}(GRS_k(\mathbf{a},\mathbf{v})))=k$.
\end{proof}

Using the extended GRS codes, we construct MDS codes of length $q+1$ with variable dimensional hull.
\begin{thm}\label{thm7}
Let $q>3$ be an odd prime power. Then there exists a $q$-ary $[q+1,k]$ MDS code with $l$-dimensional hull for any $1\leq l \leq k-1$ and there exists a $q$-ary $[q+1,(q+1)/2]$ MDS code with $l$-dimensional hull for any $1\leq l \leq (q+1)/2$.
\end{thm}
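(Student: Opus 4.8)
The plan is to mirror the strategy of Theorem \ref{thm1}, but now working with the extended GRS code $GRS_k(\mathbf{a},\mathbf{v},\infty)$ of length $q+1$ whose evaluation points $\mathbf{a}=(\alpha_1,\dots,\alpha_q)$ run over all of $\Ff_q$, and applying Lemma \ref{lem2} in place of Lemma \ref{lem1}. Since $q$ is odd I cannot take square roots freely as in the even characteristic case, so instead I will prescribe the multipliers directly. Fix $a\in\Ff_q^*$ with $a^2\neq 1$, which exists because $q>3$, and for an integer $s$ with $0\le s\le k-1$ set
$$\mathbf{v}=(\underbrace{a,\dots,a}_{s},\underbrace{1,\dots,1}_{q-s}).$$
All entries are nonzero, so this is a legitimate extended GRS code, and being extended GRS it is automatically MDS; the only real work is to compute the dimension of its hull as a function of $s$.

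Next I would translate membership in the hull through Lemma \ref{lem2}. A codeword attached to $f$ with $\deg f\le k-1$ lies in $GRS_k(\mathbf{a},\mathbf{v},\infty)^{\bot}$ iff there is $g$ with $\deg g\le q-k$ satisfying $v_i^2 f(\alpha_i)=g(\alpha_i)$ for $1\le i\le q$ together with the extra coordinate condition $f_{k-1}=g_{q-k}$. Writing $h=g-f$, the $q-s$ indices with $v_i=1$ force $h(\alpha_i)=0$, so $h$ vanishes at $q-s$ distinct points; the key degree inequality $\deg h\le q-k<q-s$, valid precisely because $s\le k-1<k$, then forces $h=0$, i.e. $g=f$. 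Feeding this back into the $s$ indices with $v_i=a$ gives $(a^2-1)f(\alpha_i)=0$, hence $f(\alpha_i)=0$ for $1\le i\le s$ since $a^2\neq 1$. Conversely, any $f$ vanishing at $\alpha_1,\dots,\alpha_s$ satisfies $v_i^2 f(\alpha_i)=f(\alpha_i)$ at every point, so $g=f$ witnesses the hull condition. Thus, up to the extra coordinate, the hull is exactly the space of $f$ of degree $\le k-1$ with the prescribed roots $\alpha_1,\dots,\alpha_s$.

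The step I expect to be the main obstacle — and the one that separates the two claimed ranges of $l$ — is the bookkeeping of the extra coordinate condition $f_{k-1}=g_{q-k}$ under $g=f$. When $k<(q+1)/2$ one has $q-k>k-1\ge\deg f$, so $g_{q-k}=f_{q-k}=0$ and the condition imposes the genuine extra constraint $f_{k-1}=0$; the hull then consists of the $f=t(x)\prod_{i=1}^s(x-\alpha_i)$ with $\deg f\le k-2$, a space of dimension $k-1-s$, and letting $s$ run over $0,1,\dots,k-2$ realizes every hull dimension $l$ with $1\le l\le k-1$. When $k=(q+1)/2$ one instead has $q-k=k-1$, so $g_{q-k}=f_{k-1}$ holds automatically and the extra condition is vacuous; now the hull is all $f=t(x)\prod_{i=1}^s(x-\alpha_i)$ with $\deg f\le k-1$, of dimension $k-s$, and letting $s$ run over $0,1,\dots,k-1$ realizes every $l$ with $1\le l\le (q+1)/2$. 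I would close by checking the easy endpoints — for $s=0$ the all-ones multiplier makes the whole code self-orthogonal, giving $l=k$ — and by confirming that the functionals ``evaluate at $\alpha_1,\dots,\alpha_s$'' are independent on the relevant polynomial space, so that the dimension counts are exact.
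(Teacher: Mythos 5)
Your proposal is correct and follows essentially the same route as the paper: the same extended GRS code with multipliers $v_i^2\neq 1$ on the first $s$ coordinates, the same application of Lemma \ref{lem2}, the same degree argument forcing $g=f$, and the same case split on whether $q-k=k-1$ (which is exactly where the extra-coordinate condition $f_{k-1}=g_{q-k}$ becomes vacuous and the hull dimension jumps from $k-1-s$ to $k-s$). Your unified treatment of $s=0$ and the substitution $h=g-f$ are minor streamlinings of the paper's three-case presentation, not a different argument.
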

\begin{proof}
Let $\Ff_q=\{\alpha_1,\alpha_2,\cdots,\alpha_q\}$ and $s$ an integer with $s\geq 1$. Assume that $\mathbf{a}=(\alpha_1,\alpha_2,\cdots,\alpha_{q})$ and $\mathbf{v}=(v_1,v_2,\cdots,v_s,1,\cdots,1)$, where $v_i\in\Ff_q^*$ and $v_i^2\neq 1$ for any $1\leq i\leq s$. Then we can obtain the $q$-ary extended GRS code $GRS_k(\mathbf{a},\mathbf{v},\infty)$ of length $q+1$ associated with $\mathbf{a}$ and $\mathbf{v}$ as follows
$$
GRS_k(\mathbf{a},\mathbf{v},\infty)=\{(v_1f(\alpha_1),\cdots,v_sf(\alpha_s),f(\alpha_{s+1}),\cdots,f(\alpha_q),f_{k-1}):f(x)\in\Ff_q[x],{\rm deg}(f(x))\leq k-1\},
$$
where $f_{k-1}$ is the coefficient of $x^{k-1}$ in $f(x)$. Suppose that
$$
(v_1f(\alpha_1),\cdots,v_sf(\alpha_s),f(\alpha_{s+1}),\cdots,f(\alpha_q),f_{k-1})
$$
is an arbitrary element of $GRS_k(\mathbf{a},\mathbf{v},\infty)\cap GRS_k(\mathbf{a},\mathbf{v},\infty)^{\bot}$. It follows from Lemma \ref{lem2} that there exists a polynomial $g(x)\in\Ff_q[x]$ with deg$(g(x))\leq q-k$ such that
\begin{equation}\label{eq71}
(v_1^2f(\alpha_1),\cdots, v_s^2f(\alpha_{s}),f(\alpha_{s+1}),\cdots,f(\alpha_{q}),f_{k-1})=(g(\alpha_1),\cdots,g(\alpha_q),g_{q-k}),
\end{equation}
where $g_{q-k}$ stands for the coefficient of $x^{q-k}$ of $g(x)$. We divide the rest of the proof into three cases.

Case 1): $1<k<\frac{q+1}{2}$ and $s\leq k-2$. From the $q+1-s$ last coordinates of $(\ref{eq71})$, we have that $f(\alpha_{i})=g(\alpha_i)$ for all $s<i\leq q$ and $f_{k-1}=g_{q-k}$. Note that deg$(f(x))\leq k-1< q-k$, deg$(g(x))\leq q-k$ and $q-s\geq q-k+2$. Since $f(x)=g(x)$ has at least $q-s$ distinct roots, $f(x)=g(x)$ for any $x\in\Ff_q$. It follows from $f_{k-1}=g_{q-k}$ that $f_{k-1}=0$; otherwise we would have $k=\frac{q+1}{2}$, which leads a contradiction. Hence, deg$(g(x))\leq k-2$. According to the first $s$ coordinates of $(\ref{eq71})$, we deduce that
$$
v_i^2f(\alpha_i)=v_i^2g(\alpha_i)=g(\alpha_i)
$$
for $1\leq i\leq s$. Due to $v_i^2\neq1$, we get that $g(\alpha^i)=0$ for any $1\leq i\leq s$. In other words, $g(x)$ has at least $s$ distinct roots. Since $f(x)=g(x)$ and deg$(g(x))\leq k-2$, we obtain
$$
g(x)=h(x)\prod_{i=1}^s(x-\alpha^i),\ \ h(x)\in\Ff_q[x],\ \  {\rm deg}(h(x))\leq k-2-s.
$$
For any $g(x)\in\Ff_q[x]$ of the form $g(x)=h(x)\prod_{i=1}^s(x-\alpha_i)$, where ${\rm deg}(h(x))\leq k-2-s$, there exists a $f(x)=g(x)=h(x)\prod_{i=1}^s(x-\alpha_i)$ such that
$$
(v_1^2f(\alpha_1),\cdots, v_s^2f(\alpha_{s}),f(\alpha_{s+1}),\cdots,f(\alpha_{q}),f_{k-1})=(g(\alpha_1),\cdots,g(\alpha_q),g_{q-k})
$$
which implies that
$$
(v_1f(\alpha_1),\cdots,v_sf(\alpha_s),f(\alpha_{s+1}),\cdots,f(\alpha_q),f_{k-1})\in GRS_k(\mathbf{a},\mathbf{v},\infty)\cap GRS_k(\mathbf{a},\mathbf{v},\infty)^{\bot}.
$$
Therefore, dim$({\rm Hull}(GRS_k(\mathbf{a},\mathbf{v},\infty)))=k-1-s$.

Case 2): $k=\frac{q+1}{2}$ and $s\leq k-1$. Equation $(\ref{eq71})$ yields $f_{k-1}=g_{q-k}$. Hence, one has deg$(f(x)-g(x))\leq k-2$. Using the same argument as in the proof of Case 1, we can easily carry out dim$({\rm Hull}(GRS_k(\mathbf{a},\mathbf{v},\infty)))=k-s$.

Case 3): Let the symbols be the same as above. Take $\mathbf{a}=(\alpha_1,\alpha_2,\cdots,\alpha_{q})$ and $\mathbf{v}=(1,1,\cdots,1)$. Consider the extended GRS code $GRS_k(\mathbf{a},\mathbf{v},\infty)$ of length $n$ as follows
$$
GRS_k(\mathbf{a},\mathbf{v},\infty)=\{(f(\alpha_1),\cdots,f(\alpha_q),f_{k-1}):f(x)\in\Ff_q[x],{\rm deg}(f(x))\leq k-1\},
$$
By the method analogous to that used in Case 1 and Case 2, we can verify that
$$
{\rm dim}({\rm Hull}(GRS_k(\mathbf{a},\mathbf{v},\infty)))=k-1,\ \ \ {\rm dim}({\rm Hull}(GRS_k(\mathbf{a},\mathbf{v},\infty)))=k,
$$
respectively.
\end{proof}

Proceeding as in the proof of Case 1 of Theorem \ref{thm7}, we have the following result over a finite field of even characteristic. Here, we omit the proof of Theorem \ref{thm8}.

\begin{thm}\label{thm8}
Let $q=2^m$, where $m>1$ is an integer. Then there exists a $q$-ary $[q+1,k]$ MDS code with $l$-dimensional hull for any $1\leq l \leq k-1$.
\end{thm}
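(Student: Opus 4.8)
The plan is to transcribe Case 1 of Theorem~\ref{thm7} into characteristic $2$, taking advantage of the fact that $(q+1)/2$ is never an integer so that the exceptional Case 2 of that proof simply cannot occur. First I would set $\Ff_q=\{\alpha_1,\dots,\alpha_q\}$ and, for an integer $s$ with $0\le s\le k-2$, take $\mathbf{a}=(\alpha_1,\dots,\alpha_q)$ and $\mathbf{v}=(v_1,\dots,v_s,1,\dots,1)$ where $v_i\in\Ff_q^*$ and $v_i^2\neq 1$ for $1\le i\le s$. Because the Frobenius map $x\mapsto x^2$ is a bijection of $\Ff_q$, the requirement $v_i^2\neq 1$ is the same as $v_i\neq 1$, and such elements exist since $m>1$ forces $|\Ff_q^*|=q-1\ge 3$. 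The code $GRS_k(\mathbf{a},\mathbf{v},\infty)$ is automatically a $[q+1,k,q-k+2]$ MDS code, so only its hull dimension is in question. Applying Lemma~\ref{lem2}, a codeword lies in the hull exactly when there is a $g\in\Ff_q[x]$ with $\deg g\le q-k$ satisfying $(v_1^2f(\alpha_1),\dots,v_s^2f(\alpha_s),f(\alpha_{s+1}),\dots,f(\alpha_q),f_{k-1})=(g(\alpha_1),\dots,g(\alpha_q),g_{q-k})$.

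The heart of the argument is a degree-and-root count identical to the odd case. The last $q+1-s$ coordinates give $f(\alpha_i)=g(\alpha_i)$ for $s<i\le q$ together with $f_{k-1}=g_{q-k}$. The standing hypothesis yields $k\le\lfloor(q+1)/2\rfloor=q/2$, hence $q-k\ge k>k-1\ge\deg f$, so $\deg(f-g)\le q-k$ while $f-g$ vanishes at the $q-s\ge q-k+2$ points $\alpha_{s+1},\dots,\alpha_q$; this forces $f=g$ as polynomials. In particular the coefficient of $x^{q-k}$ in $f$ is $0$, so $f_{k-1}=g_{q-k}=0$ and $\deg g\le k-2$. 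Reading the first $s$ coordinates then gives $(v_i^2-1)f(\alpha_i)=0$, and since $v_i^2\neq 1$ we conclude $g(\alpha_i)=f(\alpha_i)=0$ for $1\le i\le s$.

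To turn this into a dimension count I would observe that $g$ has the $s$ distinct roots $\alpha_1,\dots,\alpha_s$ and $\deg g\le k-2$, so $g(x)=h(x)\prod_{i=1}^{s}(x-\alpha_i)$ with $\deg h\le k-2-s$; conversely, taking $f=g$ of this form produces a genuine hull codeword. Since evaluation at $q$ points is injective on polynomials of degree below $q$, the map sending $h$ to its codeword is a linear isomorphism onto ${\rm Hull}(GRS_k(\mathbf{a},\mathbf{v},\infty))$, whence $\dim{\rm Hull}=k-1-s$. Letting $s$ run through $1\le s\le k-2$ realises every $l$ with $1\le l\le k-2$, and the last value $l=k-1$ is obtained from $s=0$, i.e.\ $\mathbf{v}=(1,\dots,1)$, where the same computation with no root constraint on $g$ gives $\dim{\rm Hull}=k-1$.

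I do not expect a real obstacle, since this merely reproduces an argument already spelled out; the only two points meriting care are precisely those where the even case departs from Theorem~\ref{thm7}. One is the existence of the multipliers $v_i$, which rests on the bijectivity of squaring in characteristic $2$ and on $m>1$. The other, and the genuinely cleaner feature here, is that the oddness of $q+1$ rules out the equality $k=(q+1)/2$ (equivalently $q-k=k-1$) that produced the separate Case 2 in the odd setting; this is exactly why the hull dimension saturates at $k-1$ rather than $k$, and hence why Theorem~\ref{thm8} claims $l\le k-1$ only.
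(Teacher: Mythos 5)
Your proposal is correct and is exactly the argument the paper intends: the paper omits the proof of Theorem~\ref{thm8} and simply says to proceed as in Case~1 of Theorem~\ref{thm7}, which is what you do, including the two points that need checking in characteristic $2$ (existence of $v_i$ with $v_i^2\neq 1$ via bijectivity of squaring, and the impossibility of the exceptional case $k=(q+1)/2$ since $k\le\lfloor(q+1)/2\rfloor=q/2$). No discrepancies with the paper's route.
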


In the following, we consider GRS codes over a finite field with odd characteristic. By taking the set $\{\alpha_1,\alpha_2,\cdots,\alpha_n\}$ defined by (\ref{eq1}) as an multiplicative subgroup of $\Ff_q^*$, we obtain the following GRS codes with variable dimensional hull.

\begin{thm}\label{thm2}
Let $q>3$ be an odd prime power. If $n>1$ with $n|(q-1)$, then there exists a $q$-ary $[n,k]$ MDS code with $l$-dimensional hull for any $1\leq l \leq k-1$.
\end{thm}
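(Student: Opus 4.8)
The plan is to take $\{\alpha_1,\dots,\alpha_n\}$ to be the group $\mu_n$ of $n$-th roots of unity in $\Ff_q^*$ (it has order $n$ since $n\mid q-1$) and to exploit the resulting closed form for the $u_i$. Because $\prod_{i=1}^n(x-\alpha_i)=x^n-1$, differentiating and using $\alpha_i^n=1$ gives $\prod_{j\neq i}(\alpha_i-\alpha_j)=n\alpha_i^{\,n-1}=n\alpha_i^{-1}$, so $u_i=\alpha_i/n$ for every $i$. Feeding this into Lemma \ref{lem1}, a codeword built from $f$ (with $\deg f\le k-1$) lies in $\mathrm{Hull}(GRS_k(\mathbf a,\mathbf v))$ exactly when there is a $g$ with $\deg g\le n-k-1$ and $g(\alpha_i)=w_if(\alpha_i)$ for all $i$, where $w_i:=nv_i^2\alpha_i^{-1}=v_i^2/u_i$. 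The only restriction the choice of $v_i$ imposes on $w_i$ is the square-class $\chi(w_i)=\chi(\alpha_i)\chi(n)$ (with $\chi$ the quadratic character), while inside that class $w_i$ is free. So the whole problem reduces to choosing the scalars $w_i$ in their forced classes so that the hull has a prescribed dimension.

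First I would dispose of the case where every $n$-th root of unity is a square, equivalently $(q-1)/n$ even, so $\chi(\alpha_i)=1$ for all $i$ and each $w_i$ is forced only into the single class $\chi(w_i)=\chi(n)$. Here I would imitate Theorem \ref{thm1}: fix $c$ with $\chi(c)=\chi(n)$ and $a$ with $a^2\neq1$, and set $w_i=a^2c$ for $1\le i\le s$ and $w_i=c$ for $s<i\le n$, all realizable. For a hull codeword the $n-s\ge n-k+1$ unscaled coordinates give $g(\alpha_i)=cf(\alpha_i)$ at strictly more points than $\deg(g-cf)\le n-k-1$, forcing $g=cf$; the scaled coordinates then force $f(\alpha_i)=0$ for $i\le s$, so $f=h\prod_{i=1}^s(x-\alpha_i)$ with $\deg h\le k-1-s$. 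Hence the hull has dimension $k-s$, and letting $s$ run over $1,\dots,k-1$ realizes every value in $\{1,\dots,k-1\}$.

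The genuine difficulty is the case $(q-1)/n$ odd: then $n$ is even, $\chi|_{\mu_n}$ is nontrivial, and writing $m=n/2$ one checks $\chi(\alpha_i)=\alpha_i^{m}\in\{\pm1\}$, so exactly half of the $w_i$ are pinned to each class and no constant can be the common value on the $n-s>n/2$ coordinates the scaling argument needs. To keep a clean description I would instead take $w_i=W(\alpha_i)$ for a monomial $W(x)=\beta x^{t}$: the equations $g(\alpha_i)=\beta\alpha_i^{t}f(\alpha_i)$ at all $n$ roots of $x^n-1$ are equivalent to $g\equiv\beta x^{t}f\pmod{x^n-1}$, so $f$ lies in the hull iff the degree-$<n$ reduction of $x^{t}f$ has degree $\le n-k-1$. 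A direct count gives $\dim\mathrm{Hull}=k-N(t)$, where $N(t)=\#\{0\le j\le k-1:(j+t)\bmod n\ge n-k\}$ is the overlap of two arcs of length $k$ in $\Zz/n$; as $t$ varies $N(t)$ changes by at most one at each step and sweeps all of $\{0,\dots,k\}$. Realizability forces $\chi(\beta)=\chi(n)$ and, since $\chi(\alpha_i)=\alpha_i^{m}$ takes both signs, forces $t$ to be \emph{odd}.

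The crux is precisely this parity constraint: with $t$ odd, $N(t)$ attains only $0$ and the odd values in $\{1,\dots,k\}$, so the monomial family alone misses every target $l$ for which $k-l$ is even and positive. I would recover these by a single-coordinate perturbation: starting from an odd choice with $N(t)=r$, replace $w_{i_0}$ by $a^2w_{i_0}$ with $a^2\neq1$, which keeps $w_{i_0}$ in its square-class. Rerunning the reduction, a hull codeword must satisfy the original conditions together with the extra equation $f(\alpha_{i_0})=0$; the decisive step—which I expect to be the main obstacle—is to choose $t$ and $i_0$ so that this extra equation is independent of the previous $r$, making the codimension jump by exactly one and the hull dimension drop from $k-r$ to $k-r-1$. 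Granting this, the odd values $N(t)$ and their one-coordinate perturbations together realize every $N\in\{1,\dots,k-1\}$, hence every hull dimension $l\in\{1,\dots,k-1\}$, which finishes the odd case and the theorem.
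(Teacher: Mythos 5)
Your reduction of the hull condition to choosing the scalars $w_i=nv_i^2\alpha_i^{-1}$ within their forced square classes is correct, and your argument in the case where $(q-1)/n$ is even is complete. The gap is in the case $(q-1)/n$ odd, where you explicitly grant the ``decisive step.'' That step does in fact always hold, but there are two things to check, not one. First, the assertion that the perturbed system is ``the original conditions together with $f(\alpha_{i_0})=0$'' needs an argument: the degree-$\le n-1$ interpolant of the perturbed data is $G=G_0+(a^2-1)w_{i_0}f(\alpha_{i_0})L_{i_0}$, where $L_{i_0}=u_{i_0}\bigl(x^{n-1}+\alpha_{i_0}x^{n-2}+\cdots+\alpha_{i_0}^{n-1}\bigr)$ has all coefficients nonzero; since $r=N(t)<k$, there is at least one exponent $e\in\{n-k,\dots,n-1\}$ with $[x^e]G_0\equiv 0$, and that coefficient equation forces $f(\alpha_{i_0})=0$, after which the remaining equations reduce to the original ones $f_j=0$, $j\in J$. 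Second, the independence you worry about is automatic for \emph{every} choice of $t$ and $i_0$: the original conditions annihilate a proper subset $J\subsetneq\{0,\dots,k-1\}$ of coefficients of $f$, while the functional $f\mapsto f(\alpha_{i_0})=\sum_j f_j\alpha_{i_0}^{\,j}$ has nonzero coefficient on every $f_j$ with $j\notin J$, so it is not in their span and the dimension drops by exactly one. So the gap is real as written but closes easily.

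That said, the entire parity case-split, the monomial family, and the perturbation are unnecessary, and the paper's proof shows why. It takes $\mathbf v=(v_1,\dots,v_s,1,\dots,1)$ with $v_i^2\neq1$ and $s\le k-2$; the $n-s\ge n-k+2$ unscaled coordinates give $f(\alpha^i)=n^{-1}\alpha^i g(\alpha^i)$ at more points than $\deg\bigl(f(x)-n^{-1}xg(x)\bigr)\le n-k$, forcing the polynomial identity $f(x)=n^{-1}xg(x)$. In your language this is the monomial $W(x)=nx^{n-1}$, i.e.\ $t=n-1$ (automatically odd when $n$ is even), and it is always realizable because it is exactly $w_i=1/u_i$, requiring no square roots at all; it gives $N=1$, hence hull dimension $k-1$ for the all-ones vector. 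The $s$ scaled coordinates then force $g(\alpha^i)=0$ for $i\le s$, so the hull is exhibited explicitly as $\{\,n^{-1}xh(x)\prod_{i=1}^s(x-\alpha^i):\deg h\le k-2-s\,\}$, of dimension $k-1-s$; the factorization is what replaces your independence argument. One uniform construction therefore realizes every $l\in\{1,\dots,k-1\}$ with no reference to square classes or to the parity of $(q-1)/n$. Your approach buys a complete description of which hull dimensions each admissible weight pattern can produce (including the observation that constants suffice only when $(q-1)/n$ is even), but at the cost of a much longer argument for the same statement.
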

\begin{proof}
Since $n>1$ and $n|(q-1)$, there exists an multiplicative subgroup $G$ of $\Ff_q^*$ of order $n$. Let $\alpha$ be a generator of $G$ and $s$ an integer with $1\leq s\leq k-2$. Put $\mathbf{a}=(\alpha,\alpha^2,\cdots,\alpha^{n})$ and $\mathbf{v}=(v_1,v_2,\cdots,v_s,1,\cdots,1)$, where $v_i\in\Ff_q^*$ and $v_i^2\neq 1$ for any $1\leq i\leq s$. Consider the $q$-ary GRS code $\cc$ of length $n$ associated with $\mathbf{a}$ and $\mathbf{v}$ as follows
$$
\cc=\{(v_1f(\alpha),\cdots, v_sf(\alpha^{s}),f(\alpha^{s+1}),\cdots,f(\alpha^{n})):f(x)\in\Ff_q[x],{\rm deg}(f(x))\leq k-1\}.
$$
Let $(v_1f(\alpha),\cdots, v_sf(\alpha^{s}),f(\alpha^{s+1}),\cdots,f(\alpha^{n}))\in\cc\cap \cc^{\bot}$. By Lemma \ref{lem1}, there is a polynomial $g(x)\in\Ff_q[x]$ with deg$(g(x))\leq n-k-1$ such that
\begin{equation}\label{eq21}
(v_1^2f(\alpha),\cdots, v_s^2f(\alpha^{s}),f(\alpha^{s+1}),\cdots,f(\alpha^{n}))=(u_1g(\alpha ),u_2g(\alpha^2),\cdots,u_ng(\alpha^n)),
\end{equation}
where $u_i=\prod_{1\leq j\leq n,j\neq i}(\alpha^i-\alpha^j)^{-1}$ for $1\leq i\leq n$. It is easy to see that
$$
u_i=\prod_{1\leq j\leq n,j\neq i}(\alpha^i-\alpha^j)^{-1}=(\alpha^{i(n-1)})^{-1}\prod_{1\leq j\leq n,j\neq i}(1-\alpha^{j-i})^{-1}=\alpha^{i}\prod_{1\leq j\leq n-1}(1-\alpha^{j})^{-1}.
$$
Note that $\prod_{1\leq j\leq n-1}(x-\alpha^{j})=\frac{x^n-1}{x-1}=1+x+\cdots+x^{n-1}$. Then we obtain $u_i=\alpha^{i}n^{-1}$ due to ${\rm gcd}(n,q)=1$.

The last $n-s$ coordinates of $(\ref{eq21})$ give $f(\alpha^i)=\alpha^{i}n^{-1}g(\alpha^i)$ for $s<i\leq n$. Thanks to $k\leq \lfloor n/2\rfloor$, we have that deg$(f(x))\leq k-1\leq n-k-1$ and deg$(g(x))\leq n-k-1$. Since $1\leq s\leq k-2$, we obtain $f(x)=n^{-1}xg(x)$ for any $x\in\Ff_q$. From the first $s$ coordinates of $(\ref{eq21})$, we derive that
$$
v_i^2f(\alpha^i)=v_i^2n^{-1}\alpha^ig(\alpha^i)=n^{-1}\alpha^ig(\alpha^i),
$$
for any $1\leq i\leq s$. It follows from $v_i^2\neq1$ that $g(\alpha^i)=0$ for any $1\leq i\leq s$. In other words, $g(x)$ has at least $s$ distinct roots. Since $f(x)=n^{-1}xg(x)$ and deg$(f(x))\leq k-1$, we get deg$(g(x))\leq k-2$ and
$$
g(x)=h(x)\prod_{i=1}^s(x-\alpha^i),\ \ h(x)\in\Ff_q[x],\ \  {\rm deg}(h(x))\leq k-2-s.
$$
Therefore, dim$({\rm Hull}(\cc))=k-1-s$.

Below, we show that there exists the MDS code $\cc$ with dim$({\rm Hull}(\cc))=k-1$. Let the symbols be the same as above. Take $\mathbf{a}=(\alpha ,\alpha^2,\cdots,\alpha^n)$ and $\mathbf{v}=(1,1,\cdots,1)$. Consider the GRS code $GRS_k(\mathbf{a},\mathbf{v})$ of length $n$ as follows
$$
GRS_k(\mathbf{a},\mathbf{v})=\{(f(\alpha ),\cdots,f(\alpha^n)):f(x)\in\Ff_q[x],{\rm deg}(f(x))\leq k-1\}.
$$
Similarly, we can verify that dim$({\rm Hull}(\cc))=k-1$.
\end{proof}

Adding the zero element into the multiplicative subgroup $G$ defined in Theorem \ref{thm2}, a $q$-ary $[n+1,k]$ GRS code is given as follows.

\begin{thm}\label{thm3}
Let $q>3$ be an odd prime power. Assume that $n>1$ with $n|(q-1)$. If $-n$ is a square of $\Ff_q$, then there exists a $q$-ary $[n+1,k]$ MDS code with $l$-dimensional hull for any $1\leq l \leq k$.
\end{thm}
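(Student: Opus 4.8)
The plan is to reuse the mechanism of Theorem \ref{thm2}, now adjoining the point $0$ to the group $G=\langle\alpha\rangle$ and setting the $n+1$ evaluation points to $\beta_0=0$ and $\beta_i=\alpha^i$ ($1\le i\le n$); these are distinct and $n+1\le q$ since $n\mid q-1$. The first step is to recompute the coefficients $u_i=\prod_{j\ne i}(\beta_i-\beta_j)^{-1}$ of Lemma \ref{lem1} for the enlarged point set. Writing $P(x)=\prod_{j=0}^{n}(x-\beta_j)=x(x^n-1)=x^{n+1}-x$, one has $u_i=P'(\beta_i)^{-1}$ with $P'(x)=(n+1)x^n-1$; hence $u_0=-1$ and, using $\alpha^{in}=1$, $u_i=n^{-1}$ for $1\le i\le n$. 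So adjoining $0$ exactly cancels the factor $\alpha^i$ that appeared in Theorem \ref{thm2} and leaves the constant $n^{-1}$ at every nonzero point.

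The decisive choice — and the step where the hypothesis is used — is the multiplier at the new coordinate. I would put $w_0^2=-n$ (possible exactly because $-n$ is a square, and $w_0\ne0$ as ${\rm gcd}(n,q)=1$), keep $v_i=1$ at the $n-s$ ``unperturbed'' coordinates, and pick $v_i\in\Ff_q^*$ with $v_i^2\ne1$ at $s$ ``perturbed'' coordinates $1\le i\le s$ (such $v_i$ exist since $q>3$), where $0\le s\le k-1$. By Lemma \ref{lem1} a hull codeword yields $g$ with ${\rm deg}(g)\le n-k$ satisfying $w_0^2f(0)=u_0g(0)$ and $v_i^2f(\alpha^i)=u_ig(\alpha^i)$. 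With $w_0^2=-n$ and $u_0=-1$ the zero coordinate becomes $g(0)=nf(0)$, which is the \emph{same} relation $g(\beta)=nf(\beta)$ forced by each unperturbed coordinate (where $v_i^2=1$ and $u_i=n^{-1}$). Thus the single scaling $w_0^2=-n$ is precisely what makes the adjoined point $0$ behave like the interior points. The conceptual step I expect to be the real obstacle is exactly this: reading off from $u_0/u_i=-n$ that $-n$ must be a square and that $w_0^2=-n$ is the scaling synchronizing the point $0$ with the rest; the remaining manipulations are routine.

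The rest is a degree count. The relation $g(\beta)=nf(\beta)$ holds at the $n-s+1$ points $\{0,\alpha^{s+1},\dots,\alpha^n\}$, and $h=g-nf$ has ${\rm deg}(h)\le n-k$; since $s\le k-1$ gives $n-s+1>n-k$, we get $h\equiv0$, i.e. $g=nf$. Substituting into the $s$ perturbed coordinates gives $(v_i^2-1)f(\alpha^i)=0$, so $f(\alpha^i)=0$ for $1\le i\le s$. Conversely, any $f$ with ${\rm deg}(f)\le k-1$ vanishing at $\alpha^1,\dots,\alpha^s$ gives a hull codeword (take $g=nf$, legitimate since ${\rm deg}(g)\le k-1\le n-k$ by $k\le\lfloor(n+1)/2\rfloor$). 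Hence the hull is parametrized by $f=\prod_{i=1}^{s}(x-\alpha^i)h(x)$ with ${\rm deg}(h)\le k-1-s$, so ${\rm dim}({\rm Hull}(\cc))=k-s$. Taking $s=k-l$ for each $1\le l\le k$ (with $l=k$ the degenerate case $s=0$, no perturbed coordinates) produces a $q$-ary $[n+1,k]$ MDS code with $l$-dimensional hull, as required.
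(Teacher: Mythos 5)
Your proposal is correct and follows essentially the same route as the paper: adjoin $0$ to the multiplicative subgroup, compute that the Lemma \ref{lem1} coefficients are $n^{-1}$ at the nonzero points and $-1$ at $0$, use the hypothesis that $-n$ is a square to choose a multiplier synchronizing the point $0$ with the others, and finish with the same degree count and root-counting argument. The only cosmetic difference is that you place the square root of $-n$ at the single zero coordinate while the paper places $a$ with $a^2=-n^{-1}$ at all nonzero coordinates; since a global rescaling of $\mathbf{v}$ does not change a GRS code, the two constructions coincide.
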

\begin{proof}
Note that $n>1$ and $n|(q-1)$. Clearly, there exists an multiplicative subgroup $G$ of $\Ff_q^*$ of order $n$. Let $\alpha$ be a generator of $G$ and $s$ an integer with $1\leq s\leq k-1$. Since ${\rm gcd}(n,q)=1$ and $-n$ is a square of $\Ff_q$, there exists an nonzero element $a$ such that $a^2=-n^{-1}$. Let $\mathbf{a}=(\alpha,\alpha^2,\cdots,\alpha^{n},0)$ and $\mathbf{v}=(av_1,av_2,\cdots,av_s,a,\cdots,a,1)$, where $v_i\in\Ff_q^*$ and $v_i^2\neq 1$ for any $1\leq i\leq s$. Consider the $q$-ary GRS code $\cc$ of length $n$ associated with $\mathbf{a}$ and $\mathbf{v}$ as follows
$$
\cc=\{(av_1f(\alpha),\cdots, av_sf(\alpha^{s}),af(\alpha^{s+1}),\cdots,af(\alpha^{n}),f(0)):f(x)\in\Ff_q[x],{\rm deg}(f(x))\leq k-1\}.
$$
Let $(av_1f(\alpha),\cdots, av_sf(\alpha^{s}),af(\alpha^{s+1}),\cdots,af(\alpha^{n}),f(0))$ be an arbitrary element of $\cc\cap \cc^{\bot}$. Using Lemma \ref{lem1}, there is a polynomial $g(x)\in\Ff_q[x]$ with deg$(g(x))\leq n-k$ such that
\begin{eqnarray}\label{eq31}
&&(-n^{-1}v_1^2f(\alpha),\cdots, -n^{-1}v_s^2f(\alpha^{s}),-n^{-1}f(\alpha^{s+1}),\cdots,-n^{-1}f(\alpha^{n}),f(0))\nonumber \\
&=&(u_1g(\alpha),u_2g(\alpha^2),\cdots,u_ng(\alpha^n),u_{n+1}g(0)),
\end{eqnarray}
where $u_i=\alpha^{-i}\prod_{1\leq j\leq n,j\neq i}(\alpha^i-\alpha^j)^{-1}$ for $1\leq i\leq n$ and $u_{n+1}=\prod_{j=1}^n(0-\alpha^j)^{-1}$. From the proof of Theorem \ref{thm2}, we have $u_i=n^{-1}$ for $1\leq i\leq n$. It is easy to check that $u_{n+1}=-1$.

It follows from the last $n+1-s$ coordinates of $(\ref{eq31})$ that $f(0)=-g(0)$ and $f(\alpha^i)=-g(\alpha^i)$ for $s<i\leq n$, i.e., $f(x)=-g(x)$ has at least $n-s+1$ distinct roots. Now that $1\leq s\leq k-1$, deg$(f(x))\leq k-1\leq n-k-1$ and deg$(g(x))\leq n-k-1$. Hence, $f(x)=-g(x)$ for any $x\in\Ff_q$. By the first $s$ coordinates of $(\ref{eq31})$, we obtain
$$
-n^{-1}v_1^2f(\alpha^i)=n^{-1}v_1^2g(\alpha^i)=n^{-1}g(\alpha^i),
$$
for any $1\leq i\leq s$. Due to $v_i^2\neq 1$, we have $g(\alpha^i)=0$ for $1\leq i\leq s$, i.e., $g(x)$ has at least $s$ distinct roots. Note that deg$(g(x))\leq k-1$ which implies that
$$
g(x)=h(x)\prod_{i=1}^s(x-\alpha^i),\ \ h(x)\in\Ff_q[x],\ \  {\rm deg}(h(x))\leq k-1-s.
$$
Therefore, the dimension of Hull$(GRS_k(\mathbf{a},\mathbf{v}))$ is $k-s$.

Below, we verify that $\cc\cap \cc^{\bot}=\cc$. Let the symbols be the same as above. Take $\mathbf{a}=(\alpha,\alpha^2,\cdots,\alpha^{n},0)$ and $\mathbf{v}=(a,\cdots,a,1)$. Consider the GRS code $\cc$ of length $n$ as follows
$$
\cc=\{(af(\alpha^1),\cdots,af(\alpha^n),f(0)):f(x)\in\Ff_q[x],{\rm deg}(f(x))\leq k-1\}.
$$
An argument similar to the one used above shows that dim$({\rm Hull}(\cc))=k$.
\end{proof}

Next, we provide a construction of MDS codes with variable dimensional hull from GRS codes of even length.

\begin{thm}\label{thm4}
Let $q\equiv1\pmod {4}$ be an odd prime power. Assume that $m>1$ is an integer such that $m|(q-1)$. If $n=2m<q-1$, then there exists a $q$-ary $[n,k]$ MDS code with $l$-dimensional hull for any $1\leq l \leq k-1$.
\end{thm}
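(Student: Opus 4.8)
The plan is to mimic the proof of Theorem~\ref{thm2}, replacing the single multiplicative subgroup of order $n$ (which need not exist here, since $2m\nmid(q-1)$ in general) by a union of two cosets of the order-$m$ subgroup. Since $m\mid(q-1)$, there is a subgroup $H=\{h\in\Ff_q^*:h^m=1\}$ of order $m$. I would fix a coset representative $\beta$ with $b:=\beta^m\neq1$ and take the $n=2m$ evaluation points to be $H\cup\beta H$; these are automatically distinct and nonzero, so the resulting $\cc=GRS_k(\mathbf a,\mathbf v)$ is a GRS code and hence MDS for any admissible $\mathbf v$. The whole argument then rests on identifying the numbers $u_i=\prod_{j\neq i}(\alpha_i-\alpha_j)^{-1}$ from Lemma~\ref{lem1} for this point set.

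The first key step is the computation of the $u_i$. Since $\prod_{s\in H}(x-s)=x^m-1$ and $\prod_{s\in\beta H}(x-s)=x^m-b$, the vanishing polynomial of the point set is $L(x)=(x^m-1)(x^m-b)$, and $u_i=1/L'(\alpha_i)$. Differentiating gives $L'(x)=mx^{m-1}(2x^m-b-1)$, so evaluating on each coset (using $s^m=1$ on $H$ and $s^m=b$ on $\beta H$, together with $s^{m-1}=s^{-1}$ and $s^{m-1}=bs^{-1}$ respectively) yields
$$u_s=\frac{s}{m(1-b)}\ \ (s\in H),\qquad u_s=\frac{-s}{mb(1-b)}\ \ (s\in\beta H).$$
The crucial observation is that on each coset $u_s$ is a fixed scalar times $s$. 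Hence if I choose $\mathbf v$ to be \emph{constant on each coset} with $v_i^2=u_i/(\lambda\alpha_i)$ for a single free scalar $\lambda\in\Ff_q^*$, then the self-orthogonality identity $v_i^2f(\alpha_i)=u_ig(\alpha_i)$ of Lemma~\ref{lem1} collapses to the \emph{uniform} relation $f(\alpha_i)=\lambda\alpha_i\,g(\alpha_i)$ at every coordinate, exactly as the relation $f(x)=n^{-1}xg(x)$ arose in Theorem~\ref{thm2}.

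The main obstacle is guaranteeing that such $v_i$ actually lie in $\Ff_q^*$, i.e.\ that the two constant coset values $u_i/(\lambda\alpha_i)$ are nonzero squares. Their product equals a square times $-b^{-1}$, and changing $\lambda$ scales both by the same factor; hence both values can be made squares for a suitable $\lambda$ precisely when $-b$ is a square in $\Ff_q$. This is where the hypotheses enter: since $q\equiv1\pmod4$, $-1$ is a square, so it suffices to exhibit a nontrivial $m$-th power $b$ that is itself a square. The $m$-th powers form the subgroup $P$ of order $(q-1)/m$ and the squares the subgroup $Q$ of order $(q-1)/2$, and I would show $|P\cap Q|\geq2$ by a direct count in the cyclic group $\Ff_q^*$, namely $|P\cap Q|=(q-1)/\operatorname{lcm}(m,2)$. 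When $m$ is even this is $(q-1)/m>2$ by $2m<q-1$; when $m$ is odd the congruence $q\equiv1\pmod4$ forces $4\mid(q-1)/m$, so $(q-1)/(2m)\geq2$. I expect this existence step to be the most delicate point, and it is exactly what $q\equiv1\pmod4$ buys (it supplies both $-1\in Q$ and the divisibility $4\mid(q-1)$).

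Finally, with $\mathbf v$ fixed I would run the degree argument of Theorem~\ref{thm2} verbatim. For $0\le s\le k-2$, multiply $s$ of the coordinates by a scalar $a$ with $a^2\neq1$. The $n-s\ge n-k+2$ untwisted coordinates force $f(x)=\lambda xg(x)$ as polynomials, since $f(x)-\lambda xg(x)$ has degree at most $n-k$ (using $k\le\lfloor n/2\rfloor$) yet vanishes at more than $n-k$ points; each twisted coordinate then gives $a^2f(\alpha_i)=\lambda\alpha_ig(\alpha_i)=f(\alpha_i)$, hence $f(\alpha_i)=0$ and so $g(\alpha_i)=0$ because $\alpha_i\neq0$. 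Thus $g(x)=h(x)\prod_{i}(x-\alpha_i)$ over the $s$ twisted points with $\deg h\le k-2-s$, and conversely every such $h$ produces a hull element, giving $\dim{\rm Hull}(\cc)=k-1-s$. Letting $s$ range over $0,1,\dots,k-2$ (with $s=0$ the untwisted base case) realises every hull dimension $l$ with $1\le l\le k-1$, as claimed.
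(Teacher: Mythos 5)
Your proposal is correct and follows essentially the same route as the paper: take the evaluation set to be the union of two cosets of the order-$m$ multiplicative subgroup, compute the $u_i$ of Lemma~\ref{lem1} (your derivative formula $u_i=1/L'(\alpha_i)$ gives exactly the paper's values, namely a coset-dependent constant times the evaluation point), normalize the weights so that hull membership collapses to the single relation $f(x)=\lambda x g(x)$, and then twist $s$ coordinates as in Theorems~\ref{thm1} and~\ref{thm2}. The one substantive difference is in how the weights are chosen, and here your version is the more careful one. The paper fixes the explicit constants $1$ on $G$ and $\gamma^{(q-1)/4}a^{1-m}$ on $\omega G$ (with $\omega=a^2$); squaring the latter gives $-\omega^{1-m}$, which turns the condition on the second coset into $f(y)=c_1\omega^{-1}yg(y)$ rather than the uniform $f(y)=c_1yg(y)$ needed for the degree argument --- as written the exponent should apparently be $-m$ rather than $1-m$. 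By instead \emph{defining} $v_i^2=u_i/(\lambda\alpha_i)$ and proving such $v_i$ exist via the count $|P\cap Q|=(q-1)/\operatorname{lcm}(m,2)\ge 2$, you sidestep this slip and make explicit where $q\equiv 1\pmod 4$ and $2m<q-1$ enter; your existence step is equivalent to the paper's choice of a square $\omega\notin G$ together with $-1$ being a square. Your concluding degree/root-counting argument, which the paper delegates to ``proceeding as in the proof of Theorem~\ref{thm1},'' is spelled out correctly.
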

\begin{proof}
It follows from $m|(q-1)$ that there exists an multiplicative subgroup $G$ of $\Ff_q^*$ of order $m$. Let $\alpha$ be a generator of $G$ and $s$ an integer with $1\leq s\leq k-2$. Since $2m<q-1$, we can take a square element $\omega\in\Ff_q^*\setminus G$. Let $\omega=a^2$ and $\gamma$ be a primitive element of $\Ff_q^*$. Set $\mathbf{a}=(\alpha,\cdots,\alpha^{m},\omega\alpha,\cdots,\omega\alpha^{m})$ and $\mathbf{v}=(v_1,v_2,\cdots,v_s,1,\cdots,1,\underbrace{\gamma^{\frac{q-1}{4}}a^{1-m},\cdots,\gamma^{\frac{q-1}{4}}a^{1-m}}_{m})$, where $v_i\in\Ff_q^*$ and $v_i^2\neq 1$ for any $1\leq i\leq s$. Consider the $q$-ary GRS code $\cc$ of length $n$ relative to $\mathbf{a}$ and $\mathbf{v}$ as follows
\begin{eqnarray*}
\cc&=&\{(v_1f(\alpha),\cdots, v_sf(\alpha^{s}),f(\alpha^{s+1}),\cdots,f(\alpha^{m}),\gamma^{\frac{q-1}{4}}a^{1-m}f(\omega\alpha),\cdots,\gamma^{\frac{q-1}{4}}a^{1-m}f(\omega\alpha^m)):\\
&&f(x)\in\Ff_q[x],{\rm deg}(f(x))\leq k-1\}.
\end{eqnarray*}
Suppose that
$$
(v_1f(\alpha),\cdots, v_sf(\alpha^{s}),f(\alpha^{s+1}),\cdots,f(\alpha^{m}),\gamma^{\frac{q-1}{4}}a^{1-m}f(\omega\alpha),\cdots,\gamma^{\frac{q-1}{4}}a^{1-m}f(\omega\alpha^m))
$$
is an arbitrary element of $\cc\cap \cc^{\bot}$. By Lemma \ref{lem1}, there is a polynomial $g(x)\in\Ff_q[x]$ with deg$(g(x))\leq n-k-1$ such that
\begin{eqnarray*}\label{eq41}
&&(v_1^2f(\alpha),\cdots, v_s^2f(\alpha^{s}),f(\alpha^{s+1}),\cdots,f(\alpha^{m}),-\omega^{1-m}f(\omega\alpha),\cdots,-\omega^{1-m}f(\omega\alpha^m))\nonumber \\
&=&(u_1g(\alpha),u_2g(\alpha^2),\cdots,u_mg(\alpha^m),u_{m+1}g(\omega\alpha),\cdots,u_{2m}g(\omega\alpha^m)),
\end{eqnarray*}
where
$$
u_i=\prod_{1\leq j\leq m,j\neq i}(\alpha^i-\alpha^j)^{-1}\prod_{h=1}^m(\alpha^i-\omega\alpha^h)^{-1}
$$
and
$$
u_{m+i}=\prod_{h=1}^m(\omega\alpha^i-\alpha^h)^{-1}\prod_{1\leq j\leq m,j\neq i}(\omega\alpha^i-\omega\alpha^j)^{-1},
$$
for $1\leq i\leq m$. Since $x^m-b^m=\prod_{h=1}^m(x-b\alpha^h)$ for any $b\in\Ff_q$, from the proof of Theorem \ref{thm2}, we get that
$$
u_i=m^{-1}\alpha^i(1-\omega^m)^{-1}\ \ \ {\rm and}\ \ u_{m+i}=-\omega^{1-m}m^{-1}\alpha^i(1-\omega^m)^{-1},
$$
where $1\leq i\leq m$. Proceeding as in the proof of Theorem \ref{thm1}, we complete the proof of this theorem.
\end{proof}

Exchanging the multiplicative subgroup of $\Ff_q^*$ in Theorem \ref{thm2} by the additive subgroup of $\Ff_q$, we have the following result.
\begin{thm}\label{thm5}
Let $q>3$ be an odd prime power. If $n>1$ with $n|q$, then there exists a $q$-ary $[n,k]$ MDS code with $l$-dimensional hull for any $1\leq l \leq k$.
\end{thm}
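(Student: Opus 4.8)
The plan is to follow the proof of Theorem~\ref{thm2} closely, but to replace the multiplicative subgroup of $\Ff_q^*$ by an additive subgroup of $\Ff_q$. Write $q=p^m$ with $p$ the characteristic. Since $n\mid q$ and $n>1$, we have $n=p^t$ for some $1\le t\le m$, so, viewing $\Ff_q$ as an $m$-dimensional vector space over $\Ff_p$, it contains an additive subgroup $H=\{\beta_1,\beta_2,\cdots,\beta_n\}$ of order $n$ (namely any $t$-dimensional $\Ff_p$-subspace). I would take $\mathbf{a}=(\beta_1,\beta_2,\cdots,\beta_n)$ as the vector of evaluation points.

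The decisive computation is that of the coefficients $u_i=\prod_{1\le j\le n,\,j\ne i}(\beta_i-\beta_j)^{-1}$ from Lemma~\ref{lem1}. Because $H$ is closed under subtraction, as $\beta_j$ ranges over $H\setminus\{\beta_i\}$ the differences $\beta_i-\beta_j$ range over precisely $H\setminus\{0\}$. Hence
$$
u_i=\Big(\prod_{h\in H\setminus\{0\}}h\Big)^{-1}=:c
$$
is a nonzero constant that does \emph{not} depend on $i$. This is the one genuinely new point compared with Theorem~\ref{thm2}: there the factor $\alpha^i$ survived in $u_i$ and forced the relation $f(x)=n^{-1}xg(x)$, which raises the degree by one and costs one dimension; here $u_i$ is constant, so the analogous relation is simply $f=cg$, and it is exactly this that lets the hull dimension climb all the way to $k$.

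With $c$ computed, I would run the same scaling trick. Fix $0\le s\le k-1$ and put $\mathbf{v}=(v_1,\cdots,v_s,1,\cdots,1)$ with $v_i\in\Ff_q^*$ and $v_i^2\ne1$ (such $v_i$ exist as $q>3$). For a codeword $(v_1f(\beta_1),\cdots,v_sf(\beta_s),f(\beta_{s+1}),\cdots,f(\beta_n))$ in ${\rm Hull}(\cc)$, Lemma~\ref{lem1} supplies $g$ with $\deg g\le n-k-1$ and $v_i^2f(\beta_i)=c\,g(\beta_i)$ for all $i$. The last $n-s$ coordinates give $f(\beta_i)=c\,g(\beta_i)$ for $i>s$; since $k\le\lfloor n/2\rfloor$ gives $\deg(f-cg)\le n-k-1<n-s$, the polynomial $f-cg$ has more roots than its degree and therefore $f=cg$ identically. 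Feeding this back into the first $s$ coordinates yields $(v_i^2-1)f(\beta_i)=0$, so $f(\beta_i)=0$ for $1\le i\le s$. The converse is immediate: if $f$ vanishes at $\beta_1,\cdots,\beta_s$ then $g=c^{-1}f$ (of degree $\le k-1\le n-k-1$) satisfies the hull condition. Thus ${\rm Hull}(\cc)$ is identified with the space of polynomials of degree $\le k-1$ vanishing at the $s$ distinct points $\beta_1,\cdots,\beta_s$, which has dimension $k-s$. Letting $s$ run through $1,\cdots,k-1$ realizes every $l=k-s\in\{1,\cdots,k-1\}$, while $s=0$ (that is $\mathbf{v}=(1,\cdots,1)$) imposes no condition on $f$, so $\cc\subseteq\cc^{\bot}$ and $\dim({\rm Hull}(\cc))=k$; together these give all $1\le l\le k$.

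I expect the sole obstacle to be the evaluation of $u_i$ above---once it is recognized as a constant, the rest is the same degree-counting as in Theorem~\ref{thm2}. The two bookkeeping facts worth verifying explicitly are that $n\mid q$ really does force $n$ to be a power of $p$ (so that an additive subgroup of order $n$ exists at all), and that the inequality $n-k-1<n-s$ is strict for every admissible $s\le k-1$, which is what upgrades ``$f$ and $cg$ agree on $n-s$ points'' to the polynomial identity $f=cg$.
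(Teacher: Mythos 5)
Your proposal is correct and follows essentially the same route as the paper: evaluate at an additive subgroup, observe that closure under subtraction makes $u_i=\prod_{h\in H\setminus\{0\}}h^{-1}$ independent of $i$, and then run the degree-counting/scaling argument of Theorems~\ref{thm1} and~\ref{thm2} to pin the hull down to the polynomials vanishing at $\beta_1,\cdots,\beta_s$. Your explicit treatment of the $l=k$ case via $s=0$ (taking $g=c^{-1}f$, which needs no square-root of $c$) is a clean way to supply the step the paper only sketches by reference to Theorem~\ref{thm1}.
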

\begin{proof}
Let $G$ be an additive subgroup of $\Ff_q$ of order $n$ and $s$ an integer with $1\leq s\leq k-1$. Label the elements of $G=\{\alpha_1,\cdots,\alpha_n\}$. Put $\mathbf{a}=(\alpha_1,\alpha_2,\cdots,\alpha_{n})$ and $\mathbf{v}=(v_1,v_2,\cdots,v_s,1,\cdots,1)$, where $v_i\in\Ff_q^*$ and $v_i^2\neq 1$ for any $1\leq i\leq s$. Consider the $q$-ary GRS code $\cc$ of length $n$ associated with $\mathbf{a}$ and $\mathbf{v}$ as follows
$$
\cc=\{(v_1f(\alpha_1),\cdots, v_sf(\alpha_{s}),f(\alpha_{s+1}),\cdots,f(\alpha_{n})):f(x)\in\Ff_q[x],{\rm deg}(f(x))\leq k-1\}.
$$
Let $(v_1f(\alpha_1),\cdots, v_sf(\alpha_{s}),f(\alpha_{s+1}),\cdots,f(\alpha_{n}))\in\cc\cap \cc^{\bot}$. It follows from Lemma \ref{lem1} that there is a polynomial $g(x)\in\Ff_q[x]$ with deg$(g(x))\leq n-k-1$ such that
\begin{equation}\label{eq51}
(v_1^2f(\alpha_1),\cdots, v_s^2f(\alpha_{s}),f(\alpha_{s+1}),\cdots,f(\alpha_{n}))=(u_1g(\alpha_1),u_2g(\alpha_2),\cdots,u_ng(\alpha_n)),
\end{equation}
where $u_i=\prod_{1\leq j\leq n,j\neq i}(\alpha_i-\alpha_j)^{-1}$ for $1\leq i\leq n$. It can be easily seen that $u_i=\prod_{\omega\in G\setminus\{0\}}\omega^{-1}$ for each $1\leq i\leq n$. The desired result can be obtained by using the same argument as in the proof of Theorem \ref{thm1}.
\end{proof}

\begin{thm}\label{thm6}
Let $q\equiv1\pmod {4}$ be an odd prime power. Suppose that $m>1$ is an integer such that $m|q$. If $n=2m<q$, then there exists a $q$-ary $[n,k]$ MDS code with $l$-dimensional hull for any $1\leq l \leq k$.
\end{thm}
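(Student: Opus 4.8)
The plan is to mirror the proof of Theorem \ref{thm4}, replacing the two multiplicative cosets by two additive ones and exploiting the fact that $\prod_{g\in G}(x-g)$ is a linearized polynomial. Since $m\mid q$ and $q$ is a power of an odd prime $p$, I would take an additive subgroup $G=\{\alpha_1,\dots,\alpha_m\}$ of $\Ff_q$ of order $m=p^j$; note that $m$ is then odd. Because $n=2m<q$, there is an element $\beta\in\Ff_q\setminus G$, and the $2m$ elements $\alpha_1,\dots,\alpha_m,\beta+\alpha_1,\dots,\beta+\alpha_m$ are distinct. I take these as the evaluation vector $\mathbf{a}$ and set $\mathbf{v}=(v_1,\dots,v_s,1,\dots,1,\gamma^{(q-1)/4},\dots,\gamma^{(q-1)/4})$, where $\gamma$ is a primitive element of $\Ff_q^*$, the first $s$ entries satisfy $v_i\in\Ff_q^*$ with $v_i^2\neq 1$, the next $m-s$ entries equal $1$, and the last $m$ entries equal the fixed square root $\gamma^{(q-1)/4}$ of $-1$, which exists precisely because $q\equiv 1\pmod 4$.

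The key computation is the evaluation of the scalars $u_i=\prod_{j\neq i}(\alpha_i-\alpha_j)^{-1}$ of Lemma \ref{lem1} for this two-coset point set. Writing $A=\prod_{g\in G\setminus\{0\}}g$ and $L(x)=\prod_{g\in G}(x-g)$, I would use that $\alpha_i-\alpha_j$ runs over $G\setminus\{0\}$ within a single coset, while the cross terms run over a translate of $G$; since $-G=G$ this gives $\prod_{g\in G}(\beta+g)=L(\beta)$ and $\prod_{g\in G}(g-\beta)=(-1)^mL(\beta)$. Hence $u_i=(-1)^mA^{-1}L(\beta)^{-1}$ on the first coset and $u_{m+i}=A^{-1}L(\beta)^{-1}$ on the second; as $m$ is odd these are $-c$ and $+c$ with $c=A^{-1}L(\beta)^{-1}$. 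I expect this to be the main obstacle: the annihilator products over two additive cosets must be simplified carefully through the linearized polynomial $L(x)$, and the parity $(-1)^m=-1$ is exactly what the factor $\gamma^{(q-1)/4}$ is introduced to absorb (this is where $q\equiv 1\pmod 4$ is essential).

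With these $u_i$ in hand, Lemma \ref{lem1} turns membership in ${\rm Hull}(\cc)$ into $f(\alpha_i)=-c\,g(\alpha_i)$ on the non-special first-coset points and $(\gamma^{(q-1)/4})^2f(\beta+\alpha_i)=c\,g(\beta+\alpha_i)$, i.e. $f(\beta+\alpha_i)=-c\,g(\beta+\alpha_i)$, on the second coset. Thus the relation $f(x)=-c\,g(x)$ holds at $2m-s$ points; since $\deg\big(f+c\,g\big)\le n-k-1=2m-k-1$ and $s\le k-1$, I conclude $f=-c\,g$ as polynomials. Substituting back into the first $s$ coordinates yields $(v_i^2-1)f(\alpha_i)=0$, and $v_i^2\neq 1$ forces $f(\alpha_i)=0$ for $1\le i\le s$. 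Consequently $f=h(x)\prod_{i=1}^s(x-\alpha_i)$ with $\deg h\le k-1-s$, and a routine verification of the converse (taking $g=-c^{-1}f$, whose degree is at most $k-1\le 2m-k-1$) shows $\dim{\rm Hull}(\cc)=k-s$. Choosing $s=k-l$ then realizes every hull dimension $1\le l\le k-1$.

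Finally, for the extremal value $l=k$ I would drop all special positions, i.e. take $s=0$ with $\mathbf{v}=(1,\dots,1,\gamma^{(q-1)/4},\dots,\gamma^{(q-1)/4})$. The same computation gives $f=-c\,g$ with no vanishing conditions imposed, so every codeword lies in the hull and $\dim{\rm Hull}(\cc)=k$; this is consistent with $k\le m=n/2$. Proceeding exactly as in the proofs of Theorems \ref{thm4} and \ref{thm5} then completes the argument.
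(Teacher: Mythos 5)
Your proposal is correct and follows essentially the same route as the paper: the same two-additive-coset evaluation set $G\cup(\beta+G)$, the same factor $\gamma^{(q-1)/4}$ (a square root of $-1$, placed on one coset) to absorb the sign discrepancy $u_i=-c$ versus $u_{m+i}=+c$, and the same degree/root-counting argument via Lemma \ref{lem1}; your packaging of the constant through $L(x)=\prod_{g\in G}(x-g)$ and the explicit remark that $m$ is an odd prime power are only cosmetic differences from the paper's computation of $\prod_{\omega\in G\setminus\{0\}}(\omega^2\pm a\omega)^{-1}$.
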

\begin{proof}
Since $m>1$ and $m|q$, there exists an additive subgroup $G$ of $\Ff_q$ of order $m$. Let $a\in\Ff_q\setminus G$ and $a+G$ be the coset of $a$ relative to $G$. Label the elements of $G=\{\alpha_1,\cdots,\alpha_n\}$. Suppose that $\gamma$ is a primitive element of $\Ff_q^*$. Put $\mathbf{a}=(\alpha_1,\cdots,\alpha_{m},a+\alpha_1,\cdots,a+\alpha_{m})$ and $\mathbf{v}=(\gamma^{\frac{q-1}{4}}v_1,\gamma^{\frac{q-1}{4}}v_2,\cdots,\gamma^{\frac{q-1}{4}}v_s,\gamma^{\frac{q-1}{4}},\cdots,\gamma^{\frac{q-1}{4}},\underbrace{1,\cdots,1}_{m})$, where $v_i\in\Ff_q^*$ and $v_i^2\neq 1$ for any $1\leq i\leq s$. Consider the $q$-ary GRS code $\cc$ of length $n$ associated with $\mathbf{a}$ and $\mathbf{v}$ as follows
\begin{eqnarray*}
\cc&=&\{(\gamma^{\frac{q-1}{4}}v_1f(\alpha_1),\cdots, \gamma^{\frac{q-1}{4}}v_sf(\alpha_{s}),\gamma^{\frac{q-1}{4}}f(\alpha_{s+1}),\cdots,\gamma^{\frac{q-1}{4}}f(\alpha_{m}),f(a+\alpha_1),\cdots,f(a+\alpha_m)):\\
&&f(x)\in\Ff_q[x],{\rm deg}(f(x))\leq k-1\}.
\end{eqnarray*}
Suppose that
$$
(\gamma^{\frac{q-1}{4}}v_1f(\alpha_1),\cdots, \gamma^{\frac{q-1}{4}}v_sf(\alpha_{s}),\gamma^{\frac{q-1}{4}}f(\alpha_{s+1}),\cdots,\gamma^{\frac{q-1}{4}}f(\alpha_{m}),f(a+\alpha_1),\cdots,f(a+\alpha_m))
$$
is an arbitrary element of $\cc\cap \cc^{\bot}$. From Lemma \ref{lem1}, there is a polynomial $g(x)\in\Ff_q[x]$ with deg$(g(x))\leq n-k-1$ such that
\begin{eqnarray*}\label{eq61}
&&(-v_1^2f(\alpha_1),\cdots, -v_s^2f(\alpha_{s}),-f(\alpha_{s+1}),\cdots,-f(\alpha_{m}),f(a+\alpha_1),\cdots,f(a+\alpha_m))\nonumber \\
&=&(u_1g(\alpha_1),u_2g(\alpha_2),\cdots,u_mg(\alpha_m),u_{m+1}g(a+\alpha_1),\cdots,u_{2m}g(a+\alpha_m)),
\end{eqnarray*}
where
$$
u_i=-a^{-1}\prod_{1\leq j\leq n, j\neq i}(\alpha_i-\alpha_j)^{-1}(\alpha_i-a-\alpha_j)^{-1}
$$
and
$$
u_{m+i}=a^{-1}\prod_{1\leq j\leq n, j\neq i}(\alpha_i-\alpha_j)^{-1}(\alpha_i+a-\alpha_j)^{-1},
$$
for $1\leq i\leq m$. It can be easily checked that
\begin{eqnarray*}
u_i&=&-a^{-1}\prod_{1\leq j\leq n, j\neq i}(\alpha_i-\alpha_j)^{-1}(\alpha_i-a-\alpha_j)^{-1}\\
&=&-a^{-1}\prod_{\omega\in G\setminus\{0\}}\omega^{-1}(\omega-a)^{-1}\\
&=&-a^{-1}\prod_{\omega\in G\setminus\{0\}}(\omega^2-a\omega)^{-1}
\end{eqnarray*}
and
\begin{eqnarray*}
u_{m+i}&=&a^{-1}\prod_{1\leq j\leq n, j\neq i}(\alpha_i-\alpha_j)^{-1}(\alpha_i+a-\alpha_j)^{-1}\\
&=&a^{-1}\prod_{\omega\in G\setminus\{0\}}(\omega^2+a\omega)^{-1}\\
&=&a^{-1}\prod_{\omega\in G\setminus\{0\}}(\omega^2-a\omega)^{-1},
\end{eqnarray*}
where $1\leq i\leq m$. Proceeding as in the proof of Theorem \ref{thm1}, we complete the proof of this theorem.
\end{proof}

\section{Constructions of entanglement-assisted quantum error correcting codes}
In this section, we introduce some definitions and notations about entanglement-assisted quantum error-correcting codes (EAQECCs). We also present several infinite families of optimal EAQECCs. We start with some notations that will be used in the following.

 Assume that $\mathcal{H}^{\otimes n}$ is the tensor product Hilbert space corresponding to an $n$-qubit system. Let $\mathbf{B}=B_1\otimes B_2\otimes \cdots \otimes B_n$ be an $n$-qubit Pauli matrix, where $B_i\in \{I,X,Y,Z\}$ is an element of the set of Pauli matrices. Assume that $G^n$ is the group of all $4^n$ $n$-qubit Pauli matrices with all possible phases. Define an equivalent class by $[\mathbf{B}]=\{\alpha \mathbf{B}| \alpha\in\Cc, |\alpha|=1\}$. The set $[G^n]=\{[\mathbf{B}]: \mathbf{B}\in G^n\}$ is a commutative group under the multiplication
$$
[\mathbf{B}][\mathbf{C}]=[B_1C_1]\otimes\cdots \otimes [B_nC_n]=[\mathbf{B}\mathbf{C}].
$$

Let $(\Zz_2)^{2n}$ be the vector space of binary vectors with length $2n$. For any $\mathbf{v}\in(\Zz_2)^{2n}$, $\mathbf{v}$ is represented by $(\mathbf{x}|\mathbf{y})$, where $\mathbf{x},\mathbf{y}\in(\Zz_2)^n$. The map $N$ from $(\Zz_2)^{2n}$ to $G^n$ is defined by
$$
N_{\mathbf{v}}=N_{v_1}\otimes N_{v_2} \otimes \cdots \otimes N_{v_1}.
$$
Put
$$
X^{\mathbf{x}}=X^{x_1}\otimes \cdots \otimes X^{x_n},
$$
$$
Y^{\mathbf{y}}=Y^{y_1}\otimes \cdots \otimes Y^{y_n}.
$$
In the single qubit case, we have $[N_{(\mathbf{x}|\mathbf{y})}]=[X^{\mathbf{x}}Y^{\mathbf{y}}]$. In the qubit Hilbert space $\mathcal{H}$, let $\mathcal{L}$ stand for the space of linear operators. Consider the isometric operator $U:\mathcal{H}^{\otimes n_1}\rightarrow \mathcal{H}^{\otimes n_2}$ and its completely positive, trace preserving (CPTP) map $\widehat{U}$ from $\mathcal{L}^{\otimes n_1}$ to $\mathcal{L}^{\otimes n_2}$ is defined as $\widehat{U}(\rho)=U\rho U^{\dagger}$.

\begin{figure}
\centering
\includegraphics[width=5.77in,height=2.75in]{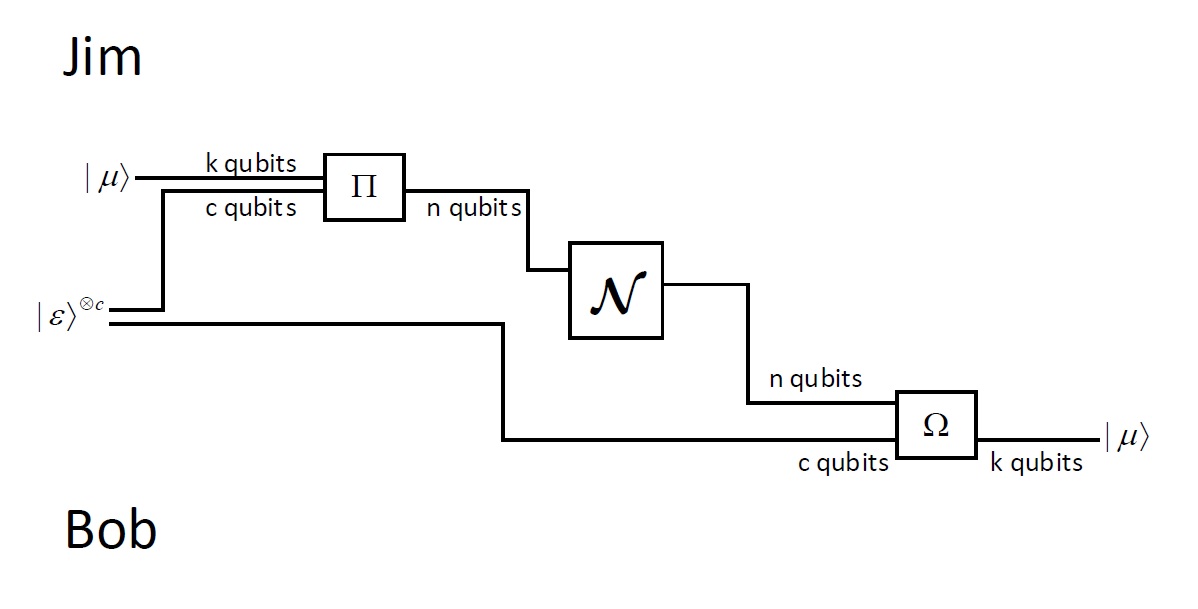}
\caption{A generic entanglement assisted quantum code}
\label{fig1}
\end{figure}

The following communication scenario showed in Fig. \ref{fig1}, contains two spatially separated parties, Jim and Bob. The schemes at their disposal are
\begin{enumerate}
\item A noisy channel, defined as a CPTP map $\mathcal{N}:\mathcal{L}^{\otimes n}\rightarrow\mathcal{L}^{\otimes n}$, takes density operators on Jim’s system to density operators on Bob’s system.
\item The $c$-ebit state $|\varepsilon\rangle^{\otimes c}$ shared between Bob and Jim.
\end{enumerate}
With the above schemes, Jim wants to send $k$-qubit quantum information to Bob perfectly. An $[[n, k, d; c]]_q$ EAQECC is composed of
\begin{enumerate}
\item An encoding isometry $\Pi:\mathcal{L}^{\otimes k}\otimes\mathcal{L}^{\otimes k}\rightarrow\mathcal{L}^{\otimes n}$.
\item A decoding CPTP map $\Omega:\mathcal{L}^{\otimes n}\otimes\mathcal{L}^{\otimes c}\rightarrow\mathcal{L}^{\otimes k}$.
\end{enumerate}
with $\Omega\circ\mathcal{N}\circ\Pi\circ U={\rm id}^{\otimes k}$, where $U$ is the isometry appending the state $|\varepsilon\rangle^{\otimes c}$, i.e.,
$
U|\mu\rangle=|\mu\rangle|\varepsilon\rangle^{\otimes c},
$
and ${\rm id}: \mathcal{L}\rightarrow\mathcal{L}$ is the identity map on a single qubit. The protocol spends $c$ ebits of entanglement and produces $k$ perfect qubit channels. The parameter $k-c$ is a good judgement of the net noiseless quantum resources gained. It is clear that the protocol is non-trivial if $k-c$ is negative.

The performance of EAQECCs is determined by its rate $\frac{k}{n}$ and net rate $\frac{k-c}{n}$. In general, the net rate can be positive, negative, or zero. If the net rate is negative, the corresponding EAQECC may have practical applications. EAQECCs with positive net rates can be employed in some other ways to increase the power and flexibility of quantum communications. Brun et al. \cite{Brun} indicated that it is possible to construct catalytic codes if the net rate of an EAQECC is positive.

In \cite{EAQ2}, Wilde and Brun provided a method for constructing EAQECCS by utilizing classical linear codes over finite fields as follows.
\begin{lem}{\rm \cite{EAQ2}}\label{lem41}
Assume that $H_1$ and $H_2$ are parity check matrices of two $q$-ary linear codes $[n, k_1, d_1]$ and $[n, k_2, d_2]$, respectively. Then there exists an $[[n,k_1+k_2-n+c,{\rm min}\{d_1,d_2\};c]]_q$ EAQECC, where $c=rank(H_1H_2^t)$ is the required number of maximally entangled states.
\end{lem}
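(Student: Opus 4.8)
The plan is to realize the two classical codes as a single (generally non-abelian) group of generalized Pauli operators and to read off the entanglement cost $c$ from the exact failure of commutativity, which is measured by $H_1H_2^t$. First I would pass to the symplectic picture: each $q$-ary generalized Pauli operator on $n$ qudits is encoded by a vector $(\mathbf{x}\,|\,\mathbf{z})\in\Ff_q^{2n}$ (using the trace-symplectic form when $q$ is not prime), with the key fact that two such operators commute precisely when the symplectic inner product $\mathbf{x}\cdot\mathbf{z}'-\mathbf{z}\cdot\mathbf{x}'$ vanishes. I would then take as generators the $n-k_1$ rows of $H_1$, read as $X$-type vectors $(\mathbf{h}_{1,i}\,|\,\mathbf{0})$, together with the $n-k_2$ rows of $H_2$, read as $Z$-type vectors $(\mathbf{0}\,|\,\mathbf{h}_{2,j})$. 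These span a subspace $V\subseteq\Ff_q^{2n}$ of dimension $2n-k_1-k_2$.

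The central computation is that any two $X$-type generators commute, any two $Z$-type generators commute, and the symplectic product of the $i$-th $X$-type generator with the $j$-th $Z$-type generator equals $\mathbf{h}_{1,i}\cdot\mathbf{h}_{2,j}=(H_1H_2^t)_{ij}$. Hence, in block form, the Gram matrix of the symplectic form restricted to $V$ is built entirely from $H_1H_2^t$ and has rank $2\,\rank(H_1H_2^t)=2c$. Applying the structure theorem for alternating forms (symplectic Gram--Schmidt), I would decompose $V=V_0\perp V_1$, where $V_0$ is the radical---on which all generators mutually commute---of dimension $(2n-k_1-k_2)-2c$, and $V_1$ is a nondegenerate symplectic space of dimension $2c$ splitting into $c$ hyperbolic pairs of mutually anticommuting generators.

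The EAQECC is then assembled from this decomposition: the $c$ anticommuting pairs are exactly the obstructions to forming a commuting stabilizer, and each is repaired by one pre-shared maximally entangled pair, since on the enlarged system a half-ebit held by Bob can be appended to turn an anticommuting pair into a commuting one. The surviving isotropic generators in $V_0$ assemble into a bona fide abelian stabilizer, and the dimension count gives $k=n-\dim V+c=k_1+k_2-n+c$ logical qudits, matching the claim. For the distance, the CSS-type separation of the generators into an $X$-part governed by the code with check matrix $H_2$ and a $Z$-part governed by the code with check matrix $H_1$ shows that no error of weight below $\min\{d_1,d_2\}$ can be an undetectable logical error, yielding minimum distance $\min\{d_1,d_2\}$.

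The main obstacle is not the algebra of the decomposition---once commutation is pinned to $H_1H_2^t$, the rank computation and the dimension bookkeeping are routine---but the operational step of producing the explicit encoding isometry $\Pi$ and decoding map $\Omega$ and verifying $\Omega\circ\mathcal{N}\circ\Pi\circ U=\mathrm{id}^{\otimes k}$. Concretely, one must show that the $c$ pre-shared ebits genuinely convert the non-abelian group into an abelian one acting on the joint sender--receiver system, so that the isotropic generators together with the entangled pairs define a valid code space on which the claimed error correction holds; this is the content that makes the construction an EAQECC rather than a mere counting identity, and it is where the trace-symplectic setup for non-prime $q$ requires the most care.
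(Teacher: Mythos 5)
The paper offers no proof of this lemma: it is quoted verbatim from Wilde and Brun \cite{EAQ2}, so there is no internal argument to compare yours against. Your outline is, in substance, the proof from that source (and its $q$-ary refinements): rows of $H_1$ as $X$-type generators, rows of $H_2$ as $Z$-type generators, commutation controlled by the block Gram matrix $\left(\begin{smallmatrix} 0 & H_1H_2^t \\ -H_2H_1^t & 0\end{smallmatrix}\right)$ of rank $2\,\rank(H_1H_2^t)$, symplectic Gram--Schmidt splitting off $c$ hyperbolic pairs each neutralized by one maximally entangled pair on Bob's side, and the count $(n+c)-(2n-k_1-k_2)=k_1+k_2-n+c$ for the logical qudits. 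Two points deserve to be made explicit rather than deferred. First, for non-prime $q$ the physical commutation relation is governed by the trace-symplectic form $\tr_{\Ff_q/\Ff_p}(\mathbf{x}\cdot\mathbf{z}'-\mathbf{z}\cdot\mathbf{x}')$, so to obtain $c=\rank_{\Ff_q}(H_1H_2^t)$ counted in $q$-dimensional entangled pairs you must run the Gram--Schmidt procedure $\Ff_q$-linearly on the symbolic code in $\Ff_q^{2n}$ and only afterwards pass to operators; this is exactly where your ``requires the most care'' remark sits, and it is resolved by working with the $\Ff_q$-bilinear form throughout. Second, the distance argument gives that any undetectable error on Alice's $n$ registers has $X$-part checked by $H_2$ and $Z$-part checked by $H_1$, hence weight at least $\min\{d_1,d_2\}$; strictly one obtains $d\geq\min\{d_1,d_2\}$ under the usual non-degenerate convention, which is all that is needed for the MDS claims downstream in this paper. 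The operational step you flag (exhibiting $\Pi$, $\Omega$ with $\Omega\circ\mathcal{N}\circ\Pi\circ U=\mathrm{id}^{\otimes k}$) is genuinely the bulk of \cite{EAQ2} and cannot be reconstructed from the rank bookkeeping alone, but as a blind reconstruction of a cited result your route is the right one and contains no wrong turn.
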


Brun et al. \cite{C1} has given the Singleton bound for an EAQECC in the following lemma.
\begin{lem}{\rm \cite{C1}}\label{lem42}
For any $[[n,k,d;c]]_q$ EAQECC, it satisfies
$$
n+c-k\geq 2(d-1),
$$
where $0\leq c\leq n-1$.
\end{lem}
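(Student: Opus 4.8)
The plan is to establish the bound by the standard information-theoretic route: translate the minimum distance into an erasure-correction property, and then extract the inequality from the purity, subadditivity and decoupling properties of the von Neumann entropy. Throughout I measure entropy in $q$-ary units, so that a maximally mixed qudit of dimension $q$ carries entropy $1$ and an $\ell$-qudit system has entropy at most $\ell$.

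First I would set up the purified global state. Let $R$ be a reference register of $k$ qudits maximally entangled with the logical input, and let $E$ denote Bob's halves of the $c$ pre-shared ebits, so that $S(R)=k$ and $S(E)=c$, with $\rho_{RE}=\rho_R\otimes\rho_E$ because the logical register and the entangled pairs are initially independent. Applying the encoding isometry $\Pi$ to Alice's $k$ logical qudits together with her $c$ ebit-halves produces the $n$ physical carriers; denoting these by the system $Q$, the joint state $|\Psi\rangle$ on $RQE$ is pure.

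Next I would invoke erasure correction. A distance-$d$ EAQECC corrects any $d-1$ erasures among the $n$ physical qudits with the decoder having $E$ at its disposal; equivalently, for every subset $A$ of the carriers with $|A|=d-1$, the reference $R$ is recoverable from the remaining systems, which forces the decoupling identity $\rho_{RA}=\rho_R\otimes\rho_A$, i.e. $S(RA)=S(R)+S(A)$. Assuming $2(d-1)\le n$, partition the carriers as $Q=A\sqcup B\sqcup C$ with $|A|=|B|=d-1$ and $|C|=n-2(d-1)$. Decoupling gives $S(RA)=S(R)+S(A)$ and $S(RB)=S(R)+S(B)$; using purity of $|\Psi\rangle$ to pass to complements, $S(RA)=S(BCE)$ and $S(RB)=S(ACE)$, and subadditivity then yields
\[
S(R)+S(A)=S(BCE)\le S(B)+S(CE),\qquad S(R)+S(B)=S(ACE)\le S(A)+S(CE).
\]
Adding these and cancelling $S(A)+S(B)$ gives $S(R)\le S(CE)$. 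Finally, bounding $S(CE)\le S(C)+S(E)\le |C|+c=n-2(d-1)+c$ and recalling $S(R)=k$ produces $k\le n-2(d-1)+c$, that is, $n+c-k\ge 2(d-1)$, as claimed; the admissible range $0\le c\le n-1$ simply records the allowed number of pre-shared pairs.

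The main obstacle is the erasure-correction step: one must make precise that, in the entanglement-assisted model, distance $d$ genuinely implies exact correctability of every $(d-1)$-subset with $E$ available to the decoder, and that this correctability is equivalent to the clean decoupling condition $\rho_{RA}=\rho_R\otimes\rho_A$ rather than to a merely approximate statement. This rests on the exact Knill--Laflamme conditions adapted to EAQECCs and on the decoupling/complementary-channel characterisation of perfect recovery. The remaining ingredients---purity, subadditivity, and the maximal-entropy bounds $S(R)=k$ and $S(E)=c$---are routine, and the only bookkeeping point is the degenerate range $2(d-1)>n$, where the tripartition fails and a separate elementary argument is needed.
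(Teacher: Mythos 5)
This lemma is not proved in the paper at all: it is imported verbatim from Brun--Devetak--Hsieh \cite{C1}, so there is no in-paper argument to compare yours against. Judged on its own merits, your entropic proof is the standard one for the quantum Singleton bound adapted to entanglement assistance, and the core chain is sound: purifying on $RQE$ with $S(R)=k$, $S(E)=c$; translating distance $d$ into correctability of any $d-1$ erasures with $E$ available; converting correctability of $R$ from $BCE$ into the decoupling $\rho_{RA}=\rho_R\otimes\rho_A$ via no-cloning; and then the two subadditivity inequalities that cancel to give $S(R)\le S(CE)\le n-2(d-1)+c$. You also correctly isolate the one nontrivial technical ingredient (exact correctability $\Leftrightarrow$ decoupling of the erased subsystem from the reference for a pure global state), which is legitimate to cite rather than reprove.

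The one point I would not let pass as ``bookkeeping'' is the regime $2(d-1)>n$, where your tripartition $Q=A\sqcup B\sqcup C$ does not exist. There is no separate elementary argument waiting there: the entropic method genuinely requires two disjoint correctable sets of size $d-1$, and it is now known (Grassl, Huber and Winter, \emph{IEEE Trans.\ Inf.\ Theory}, 2022) that the bound $n+c-k\ge 2(d-1)$ as stated can fail for EAQECCs with $d-1>n/2$; the correct bound in that regime has a different form. So your proof establishes the lemma only for $d\le n/2+1$, which happens to cover every code constructed in this paper (all of them satisfy either $d-1=k\le\lfloor n/2\rfloor$ or meet the bound with equality at $2(d-1)=2(n-k)$), but it does not establish the lemma in the unrestricted generality in which it is stated. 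If you want to present this argument, either add the hypothesis $2(d-1)\le n$ or flag explicitly that the statement quoted from \cite{C1} needs that restriction.
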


An EAQECC is called a MDS EAQECC if its parameters achieve the Singleton bound. In general, the parameter $c=rank(H_1H_2^t)$ is not easy to compute until Guenda et al. \cite{EAQ} provided a relation between the required number of maximally entangled states and the dimension of the hull of a classical code as follows.
\begin{lem}{\rm \cite{EAQ}}\label{lem43}
Let $\cc$ be a $q$-ary linear codes with $[n, k, d]$. Assume that $H$ is a parity check matrix and $G$ is a generator matrix of $\cc$. Then we have
$$
rank(HH^t)=n-k-{\rm dim}({\rm Hull}(\cc))=n-k-{\rm dim}({\rm Hull}(\cc^{\bot})),
$$
and
$$
rank(GG^t)=k-{\rm dim}({\rm Hull}(\cc))=k-{\rm dim}({\rm Hull}(\cc^{\bot})).
$$
\end{lem}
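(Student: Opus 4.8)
The plan is to reduce everything to a single kernel computation and then exploit duality. First I would observe that the ``second'' equality in each of the two displayed identities is essentially free: since $(\cc^{\bot})^{\bot}=\cc$, one has ${\rm Hull}(\cc)=\cc\cap\cc^{\bot}=\cc^{\bot}\cap(\cc^{\bot})^{\bot}={\rm Hull}(\cc^{\bot})$, so ${\rm dim}({\rm Hull}(\cc))={\rm dim}({\rm Hull}(\cc^{\bot}))$. Thus it suffices to prove the two left-hand equalities $\rank(GG^t)=k-{\rm dim}({\rm Hull}(\cc))$ and $\rank(HH^t)=(n-k)-{\rm dim}({\rm Hull}(\cc))$.

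The core of the argument is the $GG^t$ identity, which I would obtain by identifying $\ker(GG^t)$ with the hull. Since $G$ is a generator matrix, it has full row rank $k$, so the transpose map $x\mapsto G^t x$ from $\Ff_q^k$ to $\Ff_q^n$ is injective and its image is the row space of $G$, namely $\cc$; hence $G^t$ restricts to a linear isomorphism $\Ff_q^k\xrightarrow{\sim}\cc$. For $x\in\Ff_q^k$ write $v=G^t x\in\cc$. Then $GG^t x=Gv=0$ holds if and only if every row of $G$ is orthogonal to $v$, i.e.\ (as the rows of $G$ span $\cc$) if and only if $v\in\cc^{\bot}$. Since $v$ already lies in $\cc$, this is equivalent to $v\in\cc\cap\cc^{\bot}={\rm Hull}(\cc)$. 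Consequently the isomorphism $G^t$ carries $\ker(GG^t)$ onto ${\rm Hull}(\cc)$, giving ${\rm dim}\,\ker(GG^t)={\rm dim}({\rm Hull}(\cc))$, and the rank--nullity theorem (noting $GG^t$ is a symmetric $k\times k$ matrix, so left and right kernels agree) yields $\rank(GG^t)=k-{\rm dim}({\rm Hull}(\cc))$.

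For the $HH^t$ identity I would simply apply the result just proved to the dual code. A parity check matrix $H$ of $\cc$ is, by definition, a generator matrix of the $[n,n-k]$ code $\cc^{\bot}$, so the $GG^t$ formula applied to $\cc^{\bot}$ gives $\rank(HH^t)=(n-k)-{\rm dim}({\rm Hull}(\cc^{\bot}))$; combining this with the identity ${\rm Hull}(\cc^{\bot})={\rm Hull}(\cc)$ from the first paragraph completes the claim.

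The only genuinely delicate point, and the place where a naive proof goes wrong, is that over a finite field the bilinear form $\mathbf{x}\cdot\mathbf{y}$ is not definite, so full row rank of $G$ does \emph{not} force $GG^t$ to be invertible (unlike over $\Rr$). The whole content of the lemma is that the exact failure of invertibility is measured by the hull, and the kernel identification above is precisely what captures this; everything else is rank--nullity and the dual-code substitution.
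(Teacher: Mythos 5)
Your proof is correct. Note, however, that the paper itself gives no proof of this lemma: it is quoted verbatim from Guenda, Jitman and Gulliver \cite{EAQ}, so there is no internal argument to compare against. Your argument is the standard one (and essentially the one in the cited source): the identification of $\ker(GG^t)$ with ${\rm Hull}(\cc)$ via the injective map $x\mapsto G^tx$, rank--nullity, the set-theoretic identity ${\rm Hull}(\cc)={\rm Hull}(\cc^{\bot})$, and the observation that $H$ is a generator matrix of $\cc^{\bot}$ are exactly the right ingredients, and each step checks out. Your closing remark correctly isolates the only subtlety, namely that over $\Ff_q$ the Euclidean form is not definite, so $GG^t$ can be singular even though $G$ has full row rank; the lemma quantifies that defect by the hull.
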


As a direct consequence of Lemma \ref{lem41}, \ref{lem42} and \ref{lem43}, one has the following lemma.

\begin{lem}{\rm \cite{EAQ}}\label{lem44}
Let $\cc$ be an $[n, k, d]$ a linear codes over $\Ff_q$ and $\cc^{\bot}$ its Euclidean dual with $[n, n-k, d^{\bot}]$. Then there exist $[[n,k-{\rm dim}({\rm Hull}(\cc)),d;n-k-{\rm dim}({\rm Hull}(\cc))]]_q$ and $[[n,n-k-{\rm dim}({\rm Hull}(\cc)),d^{\bot};k-{\rm dim}({\rm Hull}(\cc))]]_q$ EAQECCs. Moreover, if $\cc$ is MDS, then the two EAQECCs are also MDS.
\end{lem}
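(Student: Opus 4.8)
The plan is to obtain both codes by specializing the Wilde--Brun construction of Lemma~\ref{lem41} to the case of two identical ingredient codes, and then rewriting the entanglement parameter $c$ through the hull formulas of Lemma~\ref{lem43}. First, to produce the code $[[n,k-{\rm dim}({\rm Hull}(\cc)),d;n-k-{\rm dim}({\rm Hull}(\cc))]]_q$, I would apply Lemma~\ref{lem41} with both input codes taken to be $\cc$, so that $H_1=H_2=H$ is a fixed parity check matrix of $\cc$ and $k_1=k_2=k$, $d_1=d_2=d$. Lemma~\ref{lem41} then yields an $[[n,\,2k-n+c,\,d;\,c]]_q$ EAQECC with $c=\rank(HH^t)$. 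Substituting $\rank(HH^t)=n-k-{\rm dim}({\rm Hull}(\cc))$ from Lemma~\ref{lem43}, the entanglement cost becomes $c=n-k-{\rm dim}({\rm Hull}(\cc))$ and the dimension simplifies to $2k-n+c=k-{\rm dim}({\rm Hull}(\cc))$, which is precisely the first asserted code.

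Next, for the code $[[n,n-k-{\rm dim}({\rm Hull}(\cc)),d^{\bot};k-{\rm dim}({\rm Hull}(\cc))]]_q$, I would instead feed Lemma~\ref{lem41} with both input codes equal to $\cc^{\bot}$; concretely, take $H_1=H_2=G$, where $G$ is a generator matrix of $\cc$ and hence a parity check matrix of $\cc^{\bot}$, so that $k_1=k_2=n-k$ and $d_1=d_2=d^{\bot}$. This produces an $[[n,\,2(n-k)-n+c,\,d^{\bot};\,c]]_q$ EAQECC, and Lemma~\ref{lem43} now gives $c=\rank(GG^t)=k-{\rm dim}({\rm Hull}(\cc))$, whence the dimension collapses to $2(n-k)-n+c=n-k-{\rm dim}({\rm Hull}(\cc))$, matching the second assertion. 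A small point to verify en route is that each value of $c$ lies in the admissible range $0\leq c\leq n-1$ required by Lemma~\ref{lem42}; this holds because the hull is a subspace of both $\cc$ and $\cc^{\bot}$, forcing ${\rm dim}({\rm Hull}(\cc))\leq\min\{k,n-k\}$ and hence $c\geq 0$.

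It remains to address the MDS claim. Assuming $\cc$ is MDS, I would use $d=n-k+1$ together with the fact that the Euclidean dual of an MDS code is again MDS, so $d^{\bot}=k+1$, and then test both constructed codes against the Singleton bound $n+c-k'\geq 2(d'-1)$ of Lemma~\ref{lem42}, where $k'$ and $d'$ denote the dimension and distance of the code in question. For the first code one computes $n+c-k'=n+(n-k-{\rm dim}({\rm Hull}(\cc)))-(k-{\rm dim}({\rm Hull}(\cc)))=2(n-k)=2(d-1)$, and for the second $n+c-k'=n+(k-{\rm dim}({\rm Hull}(\cc)))-(n-k-{\rm dim}({\rm Hull}(\cc)))=2k=2(d^{\bot}-1)$; in each case the bound is met with equality, so both EAQECCs are MDS. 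The whole argument is essentially bookkeeping: the only place demanding care is tracking the dimension $k_1+k_2-n+c$ through the two specializations and confirming that the $HH^t$ and $GG^t$ branches of Lemma~\ref{lem43} deliver exactly the two claimed entanglement costs, and I expect no conceptual obstacle beyond this substitution.
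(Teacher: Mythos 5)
Your proposal is correct and follows exactly the route the paper intends: the lemma is stated as a direct consequence of Lemmas~\ref{lem41}, \ref{lem42} and \ref{lem43}, and you specialize Lemma~\ref{lem41} to two identical ingredient codes ($\cc$ and then $\cc^{\bot}$), substitute the rank formulas of Lemma~\ref{lem43}, and verify equality in the Singleton bound of Lemma~\ref{lem42}. All the bookkeeping checks out, so nothing further is needed.
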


With the above lemma, the construction of MDS EAQECCs turns into that of MDS linear codes with the determined dimensional hull. In Section 3, using GRS codes and extended GRS codes, we presented several families of MDS codes and completely determined their hull. Let $1<k\leq \lfloor n/2\rfloor$ in the sequel. By Lemma \ref{lem44} and all the theorems of Section 3, we have the following results directly.

\begin{thm}\label{thm41}
Let $m>1$ be an integer and $q=2^m$. If $1<n\leq q$, then there exist $[[n,k-s,n-k+1;n-k-s]]_q$ and $[[n,n-k-s,k+1;k-s]]_q$ MDS EAQECCs, where $1\leq s \leq k$.
\end{thm}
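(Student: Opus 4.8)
The plan is to obtain Theorem~\ref{thm41} as a direct corollary of Theorem~\ref{thm1} together with Lemma~\ref{lem44}; essentially no new work beyond bookkeeping is required, so I would keep the argument short. By Theorem~\ref{thm1}, for $q=2^m$ with $m>1$ and $1<n\leq q$, and for every dimension $k$ with $1<k\leq\lfloor n/2\rfloor$, there exists a binary (more precisely, $q$-ary) $[n,k]$ MDS code $\cc$ whose hull has dimension exactly $l$ for any prescribed $1\leq l\leq k$. Writing $l=k-s$, I can equivalently say: for each integer $s$ with $0\leq s\leq k-1$ there is an $[n,k,n-k+1]$ MDS code $\cc$ with $\dim(\mathrm{Hull}(\cc))=k-s$, and the case $s=0$ (hull dimension $k$) is also covered by the same theorem. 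The minimum distance is $n-k+1$ since $\cc$ is MDS.

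The key step is then to feed this code into Lemma~\ref{lem44}. That lemma guarantees, from any $[n,k,d]$ code $\cc$ with dual $[n,n-k,d^{\perp}]$, the existence of the two EAQECCs
$$
[[\,n,\;k-\dim(\mathrm{Hull}(\cc)),\;d;\;n-k-\dim(\mathrm{Hull}(\cc))\,]]_q
$$
and
$$
[[\,n,\;n-k-\dim(\mathrm{Hull}(\cc)),\;d^{\perp};\;k-\dim(\mathrm{Hull}(\cc))\,]]_q,
$$
and asserts that both are MDS whenever $\cc$ is MDS. Substituting $d=n-k+1$, $d^{\perp}=k+1$ (the dual of an $[n,k]$ MDS code is an $[n,n-k]$ MDS code, so $d^{\perp}=n-(n-k)+1=k+1$), and $\dim(\mathrm{Hull}(\cc))=k-s$, the two EAQECC parameter sets collapse exactly to
$[[n,k-s,\,n-k+1;\,n-k-s]]_q$ and $[[n,n-k-s,\,k+1;\,k-s]]_q$, which are the claimed families. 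The MDS property of both is inherited from $\cc$ via the last sentence of Lemma~\ref{lem44}.

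The only point demanding a little care is the range of $s$. Theorem~\ref{thm41} asserts $1\leq s\leq k$, whereas the natural parameterisation above gives hull dimension $k-s$ with $s$ running over $0,1,\dots,k-1$ (so that the hull dimension $l=k-s$ runs over $1,\dots,k$, matching Theorem~\ref{thm1}). I would reconcile these by a straightforward reindexing, noting that the entanglement parameter $c=n-k-\dim(\mathrm{Hull})$ and the logical dimension $k-\dim(\mathrm{Hull})$ are what must remain nonnegative; once the index is aligned with the hull dimensions produced in Theorem~\ref{thm1}, every parameter is legitimate. I do \emph{not} anticipate any genuine obstacle here: the entire content has already been established, and the proof is a one-line invocation of Lemma~\ref{lem44} applied to the codes of Theorem~\ref{thm1}, plus the observation that $d^{\perp}=k+1$ for the MDS dual. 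Accordingly I would simply state that the result follows immediately from Theorem~\ref{thm1} and Lemma~\ref{lem44}, and omit the routine substitution.
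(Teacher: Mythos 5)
Your overall strategy is exactly the paper's: Theorem~\ref{thm41} is stated there as an immediate consequence of Theorem~\ref{thm1} and Lemma~\ref{lem44}, with no further argument. However, your bookkeeping contains a genuine slip. In the statement of Theorem~\ref{thm41} the parameter $s$ \emph{is} the hull dimension: comparing $[[n,k-s,n-k+1;n-k-s]]_q$ with the output $[[n,\,k-\dim(\mathrm{Hull}(\cc)),\,d;\,n-k-\dim(\mathrm{Hull}(\cc))]]_q$ of Lemma~\ref{lem44} forces $s=\dim(\mathrm{Hull}(\cc))$. You instead set $\dim(\mathrm{Hull}(\cc))=k-s$, and with that substitution the lemma yields
$$
[[\,n,\;s,\;n-k+1;\;n-2k+s\,]]_q \quad\text{and}\quad [[\,n,\;n-2k+s,\;k+1;\;s\,]]_q,
$$
which are \emph{not} the families claimed in the theorem; your assertion that the parameters ``collapse exactly'' to $[[n,k-s,n-k+1;n-k-s]]_q$ does not follow from the substitution you performed. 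The range mismatch you then notice ($0\leq s\leq k-1$ versus $1\leq s\leq k$) is a symptom of this mis-identification rather than something to be ``reconciled by reindexing'': with the correct identification $s=l=\dim(\mathrm{Hull}(\cc))$, Theorem~\ref{thm1} supplies hull dimensions $1\leq l\leq k$, so $s$ ranges over exactly $1\leq s\leq k$ and the parameters of Lemma~\ref{lem44} literally read off as those in Theorem~\ref{thm41}, with $d=n-k+1$ and $d^{\perp}=k+1$ as you correctly computed. The fix is one line, but as written the derivation is internally inconsistent.
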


\begin{thm}\label{thm44}
Let $q=2^m$, where $m>1$ is an integer. If $n=q+1$ and $1\leq s \leq k-1$, then there exist $[[n,k-s,n-k+1;n-k-s]]_q$ and $[[n,n-k-s,k+1;k-s]]_q$ MDS EAQECCs.
\end{thm}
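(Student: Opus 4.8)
The plan is to combine the existence result for extended-GRS MDS codes with prescribed hull dimension (Theorem \ref{thm8}) with the general hull-to-EAQECC dictionary (Lemma \ref{lem44}); since all the substantive work has already been carried out in Section 3, the argument reduces to a short deduction. First I would invoke Theorem \ref{thm8}: for $q=2^m$ with $m>1$ and $n=q+1$, and for any integer $s$ with $1\leq s\leq k-1$, there exists a $q$-ary $[n,k]$ MDS code $\cc$ whose hull satisfies ${\rm dim}({\rm Hull}(\cc))=s$. This is precisely the range of $s$ permitted in the statement, so no case distinction on $s$ is required.

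Next I would record the MDS parameters needed to feed Lemma \ref{lem44}. As an $[n,k]$ MDS code, $\cc$ has minimum distance $d=n-k+1$; moreover its Euclidean dual $\cc^{\bot}$ is again an $[n,n-k]$ MDS code, so its minimum distance is $d^{\bot}=n-(n-k)+1=k+1$. These two values are exactly the distances appearing in the two claimed EAQECC parameter tuples.

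Finally, applying Lemma \ref{lem44} with ${\rm dim}({\rm Hull}(\cc))=s$ and substituting $d=n-k+1$ and $d^{\bot}=k+1$ yields an $[[n,k-s,n-k+1;n-k-s]]_q$ EAQECC together with an $[[n,n-k-s,k+1;k-s]]_q$ EAQECC. Because $\cc$ is MDS, the last sentence of Lemma \ref{lem44} guarantees that both resulting EAQECCs are MDS, which is the assertion of the theorem. I expect no genuine obstacle in this argument: essentially the entire difficulty is absorbed into Theorem \ref{thm8}, whose proof parallels Case 1 of Theorem \ref{thm7}, so the only thing left to check is the elementary arithmetic matching the hull dimension $s$ to the two parameter sets — exactly the same bookkeeping already used to derive Theorem \ref{thm41} from Theorem \ref{thm1}.
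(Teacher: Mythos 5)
Your proposal is correct and follows exactly the route the paper takes: Theorem \ref{thm44} is stated there as a direct consequence of Theorem \ref{thm8} (existence of a $[q+1,k]$ MDS code with $s$-dimensional hull for $1\leq s\leq k-1$) combined with Lemma \ref{lem44}, with the same substitution $d=n-k+1$, $d^{\bot}=k+1$. Nothing is missing.
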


\begin{thm}\label{thm42}
Let $q>3$ be an odd prime power. Then there exist $[[n,k-s,n-k+1;n-k-s]]_q$ and $[[n,n-k-s,k+1;k-s]]_q$ MDS EAQECCs for $1\leq s \leq k$ if $q$, $n$ and $k$ satisfy one of the following conditions.
\begin{enumerate}[(1)]
\item $n=q+1$ and $k=\frac{q+1}{2}$.
\item $n>2$ with $(n-1)|(q-1)$ and $-(n-1)$ is a square of $\Ff_q$.
\item $n>1$ with $n|q$.
\item $q\equiv1\pmod {4}$ and $n=2m<q$, where $m>1$ with $m|q$.
\end{enumerate}
\end{thm}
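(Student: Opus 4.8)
The plan is to derive the whole statement from Lemma \ref{lem44} by supplying, for each of the four listed conditions, an $[n,k,n-k+1]$ MDS code $\cc$ over $\Ff_q$ whose hull has dimension exactly $s$. Once such a $\cc$ is in hand, its Euclidean dual $\cc^{\bot}$ is an $[n,n-k,k+1]$ MDS code, and since $\text{Hull}(\cc)=\cc\cap\cc^{\bot}=\text{Hull}(\cc^{\bot})$ one has $\dim(\text{Hull}(\cc))=\dim(\text{Hull}(\cc^{\bot}))=s$ (as already recorded in Lemma \ref{lem43}). Feeding these data into Lemma \ref{lem44} with $d=n-k+1$ and $d^{\bot}=k+1$ then produces exactly the two codes
$$
[[n,k-s,n-k+1;n-k-s]]_q \quad\text{and}\quad [[n,n-k-s,k+1;k-s]]_q,
$$
and the ``moreover'' clause of Lemma \ref{lem44} guarantees both are MDS because $\cc$ is. Thus the theorem is a direct corollary of the hull computations of Section 3, and no new code construction is needed.

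It remains to pair each condition with the right source theorem and check that the attainable hull dimension reaches $k$, so that $s$ may run through the full range $1\le s\le k$. For condition (1), where $n=q+1$ and $k=\frac{q+1}{2}$, I would invoke the second assertion of Theorem \ref{thm7}, which yields a $[q+1,(q+1)/2]$ MDS code with $l$-dimensional hull for every $1\le l\le (q+1)/2=k$; taking $l=s$ covers $1\le s\le k$. For condition (3), $n>1$ with $n\mid q$, Theorem \ref{thm5} directly gives an $[n,k]$ MDS code with $s$-dimensional hull for $1\le s\le k$. For condition (4), $q\equiv1\pmod 4$ and $n=2m<q$ with $m\mid q$, Theorem \ref{thm6} supplies the required $[n,k]$ MDS code with $s$-dimensional hull for $1\le s\le k$.

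Condition (2) is the only step that needs genuine care, since it requires a relabeling of the length index. Here $n>2$, $(n-1)\mid(q-1)$, and $-(n-1)$ is a square of $\Ff_q$. Applying Theorem \ref{thm3} with its internal length parameter set equal to $n-1$ (so that its hypotheses ``$n-1>1$, $(n-1)\mid(q-1)$, $-(n-1)$ a square'' are exactly the stated ones) produces a $q$-ary $[(n-1)+1,k]=[n,k]$ MDS code with $l$-dimensional hull for every $1\le l\le k$, and setting $l=s$ again gives $1\le s\le k$. I anticipate no real obstacle in the argument: the entire proof is a bookkeeping check that each source theorem attains hull dimension all the way up to $k$, that the dual of an $[n,k]$ MDS code has minimum distance $k+1$, and that the shift in the length index for condition (2) is applied correctly; after these verifications the parameters drop straight out of Lemma \ref{lem44}.
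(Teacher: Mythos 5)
Your proposal is correct and is essentially the paper's own argument: the paper derives Theorem \ref{thm42} directly from Lemma \ref{lem44} together with the hull computations of Section 3, and your pairing of the four conditions with Theorems \ref{thm7} (second assertion), \ref{thm3} (with the length index shifted from $n$ to $n-1$), \ref{thm5}, and \ref{thm6} is exactly the intended bookkeeping. The only difference is that you spell out the correspondence explicitly, which the paper leaves implicit.
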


\begin{thm}\label{thm43}
Let $q>3$ be an odd prime power. Then there exist $[[n,k-s,n-k+1;n-k-s]]_q$ and $[[n,n-k-s,k+1;k-s]]_q$ MDS EAQECCs for $1\leq s \leq k-1$ if $q$ and $n$ satisfy one of the following conditions.
\begin{enumerate}[(1)]
\item $n=q+1$.
\item $n>1$ with $n|(q-1)$.
\item $q\equiv1\pmod {4}$ and $n=2m<q-1$, where $m>1$ with $m|(q-1)$.
\end{enumerate}
\end{thm}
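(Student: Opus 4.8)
The plan is to read this statement as a direct corollary of Lemma~\ref{lem44}, fed by the MDS-code constructions of Section~3. The three listed conditions on $q$ and $n$ are precisely the hypotheses under which one of the earlier theorems produces a $q$-ary $[n,k]$ MDS code whose hull can be made to have any dimension between $1$ and $k-1$; the substance of the proof is therefore to match each condition to the right construction and then turn the crank on Lemma~\ref{lem44}.

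First I would fix $s$ with $1\le s\le k-1$ and record what Lemma~\ref{lem44} outputs when its input is an $[n,k]$ MDS code $\cc$ with $\dim(\mathrm{Hull}(\cc))=s$. Being MDS, $\cc$ has minimum distance $d=n-k+1$, and its Euclidean dual is the $[n,n-k,k+1]$ MDS code $\cc^{\bot}$, so $d^{\bot}=k+1$. Substituting $\dim(\mathrm{Hull}(\cc))=s$ into Lemma~\ref{lem44} yields the two codes
$$
[[n,k-s,n-k+1;n-k-s]]_q \quad\text{and}\quad [[n,n-k-s,k+1;k-s]]_q,
$$
which are exactly the parameters claimed. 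The ``Moreover'' clause of Lemma~\ref{lem44} guarantees that both are MDS EAQECCs, since $\cc$ is MDS; concretely one checks $n+c-k'=2(n-k)=2(d'-1)$ for the first family and $n+c-k'=2k=2(d'-1)$ for the second, so the Singleton bound of Lemma~\ref{lem42} is met with equality in each case.

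It then remains to supply, for each condition and each $s$ in the stated range, an $[n,k]$ MDS code with $s$-dimensional hull. For condition~(1), $n=q+1$, Theorem~\ref{thm7} (its first assertion, valid for $1\le l\le k-1$) provides an extended GRS code with the required hull dimension; note that here one uses precisely this branch rather than the $k=(q+1)/2$ branch, matching the range $1\le s\le k-1$. For condition~(2), $n\mid(q-1)$, the code comes from Theorem~\ref{thm2}; for condition~(3), $q\equiv1\pmod 4$ and $n=2m<q-1$ with $m\mid(q-1)$, it comes from Theorem~\ref{thm4}. In each instance, setting the hull dimension equal to $s$ and invoking the computation of the previous paragraph finishes the argument.

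Since every ingredient is already in place, there is no genuine analytic obstacle here; the only thing demanding care is the bookkeeping---confirming that the hull-dimension range produced by each Section~3 theorem is exactly $\{1,\dots,k-1\}$ and lines up with the index $s$ in the two EAQECC families, and that the MDS property is inherited by the dual so that $d^{\bot}=k+1$ is legitimate. The main ``obstacle,'' such as it is, reduces to selecting the correct source theorem for each of the three conditions and keeping the two dual families (one built from $\cc$, one from $\cc^{\bot}$) straight throughout.
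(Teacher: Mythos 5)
Your proposal is correct and follows essentially the same route as the paper, which presents Theorem~\ref{thm43} as a direct consequence of Lemma~\ref{lem44} applied to the Section~3 constructions; your matching of condition (1) to Theorem~\ref{thm7}, condition (2) to Theorem~\ref{thm2}, and condition (3) to Theorem~\ref{thm4} is exactly the intended correspondence. The only difference is that you spell out the parameter substitution and the Singleton-bound check explicitly, which the paper leaves implicit.
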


\begin{remark}
The required number of maximally entangled states of the MDS EAQECCs reported in the literature (see for instance \cite{EAQ,CC1,CC2,CC3}) is fixed. However, the required number of maximally entangled states of the MDS EAQECCs defined by Theorem \ref{thm41}, \ref{thm44}, \ref{thm42} and \ref{thm43} can take almost all values. Consequently, the parameters of the MDS EAQECCs defined by Theorem \ref{thm41}, \ref{thm44}, \ref{thm42} and \ref{thm43} are new and flexible. For instance, we list the parameters of these MDS EAQECCs for some given $q,n$ in Table \ref{table1}, \ref{table2}, \ref{table3} and \ref{table4}.
\end{remark}

\begin{remark}
To the best of our knowledge, it is the first infinite family of MDS EAQECCs that the required number of maximally entangled states can take all values since ${\rm dim}({\rm Hull}(\cc))$ is arbitrary. (we exclude the case that ${\rm dim}({\rm Hull}(\cc))=0$, i.e., $c=k$ or $n-k$ since the existence question about MDS codes with $0$-dimensional hull has been completely solved for $q=2$ and $q>3$ in \cite{Carlet} and \cite{Jin}.)
\end{remark}

\begin{table}
\caption{\footnotesize Sample parameters of MDS EAQECCs of Theorem \ref{thm41} \protect \\
for $q=16$ and $n=10$}
 \label{table1}
 \centering
\begin{threeparttable}
\begin{tabular*}{8cm}{lll|lll}
\hline
$k$ & $s$  & $[[n,k_1,d_1;c_1]]_q$\tnote{1} & $k$ & $s$  & $[[n,k_2,d_2;c_2]]_q$\tnote{2}\\
\hline
$2$ & $1$ & $[[10, 1, 9,7]]_{16}$ & $2$ & $1$ & $[[10, 7, 3,1]]_{16}$\\
$2$ & $2$ & $[[10, 0, 9,6]]_{16}$ & $2$ & $2$ & $[[10, 6, 3,0]]_{16}$\\
$3$ & $1$ & $[[10, 2, 8,6]]_{16}$ & $3$ & $1$ & $[[10, 6, 4,2]]_{16}$\\
$3$ & $2$ & $[[10, 1, 8,5]]_{16}$ & $3$ & $2$ & $[[10, 5, 4,1]]_{16}$\\
$3$ & $3$ & $[[10, 0, 8,4]]_{16}$ & $3$ & $3$ & $[[10, 4, 4,0]]_{16}$\\
$4$ & $1$ & $[[10, 3, 7,5]]_{16}$ & $4$ & $1$ & $[[10, 5, 5,3]]_{16}$\\
$4$ & $2$ & $[[10, 2, 7,4]]_{16}$ & $4$ & $2$ & $[[10, 4, 5,2]]_{16}$\\
$4$ & $3$ & $[[10, 1, 7,3]]_{16}$ & $4$ & $3$ & $[[10, 3, 5,1]]_{16}$\\
$4$ & $4$ & $[[10, 0, 7,2]]_{16}$ & $4$ & $4$ & $[[10, 2, 5,0]]_{16}$\\
$5$ & $1$ & $[[10, 4, 6,4]]_{16}$ & $5$ & $1$ & $[[10, 4, 6,4]]_{16}$\\
$5$ & $2$ & $[[10, 3, 6,3]]_{16}$ & $5$ & $2$ & $[[10, 3, 6,3]]_{16}$\\
$5$ & $3$ & $[[10, 2, 6,2]]_{16}$ & $5$ & $3$ & $[[10, 2, 6,2]]_{16}$\\
$5$ & $4$ & $[[10, 1, 6,1]]_{16}$ & $5$ & $4$ & $[[10, 1, 6,1]]_{16}$\\
$5$ & $5$ & $[[10, 0, 6,0]]_{16}$ & $5$ & $5$ & $[[10, 0, 6,0]]_{16}$\\
\hline
\end{tabular*}
      \begin{tablenotes}
        \footnotesize
        \item[1] $k_1=k-s$, $d_1=n-k+1$, $c_1=n-k-s$.
        \item[2] $k_2=n-k-s$, $d_2=k+1$, $c_2=k-s$.
      \end{tablenotes}
    \end{threeparttable}
\end{table}

\begin{table}
\caption{\footnotesize Sample parameters of MDS EAQECCs of Theorem \ref{thm44} \protect \\
for $q=16$ and $n=17$}
 \label{table2}
 \centering
\begin{threeparttable}
\begin{tabular*}{8cm}{lll|lll}
\hline
$k$ & $s$  & $[[n,k_1,d_1;c_1]]_q$\tnote{1} & $k$ & $s$  & $[[n,k_2,d_2;c_2]]_q$\tnote{2}\\
\hline
$2$ & $1$ & $[[17, 1, 16,14]]_{16}$ & $2$ & $1$ & $[[17, 14, 3,1]]_{16}$\\
$3$ & $1$ & $[[17, 2, 15,13]]_{16}$ & $3$ & $1$ & $[[17, 13, 4,2]]_{16}$\\
$3$ & $2$ & $[[17, 1, 15,12]]_{16}$ & $3$ & $2$ & $[[17, 12, 4,1]]_{16}$\\
$4$ & $1$ & $[[17, 3, 14,12]]_{16}$ & $4$ & $1$ & $[[17, 12, 5,3]]_{16}$\\
$4$ & $2$ & $[[17, 2, 14,11]]_{16}$ & $4$ & $2$ & $[[17, 11, 5,2]]_{16}$\\
$4$ & $3$ & $[[17, 1, 14,10]]_{16}$ & $4$ & $3$ & $[[17, 10, 5,1]]_{16}$\\
$5$ & $1$ & $[[17, 4, 13,11]]_{16}$ & $5$ & $1$ & $[[17, 11, 6,4]]_{16}$\\
$5$ & $2$ & $[[17, 3, 13,10]]_{16}$ & $5$ & $2$ & $[[17, 10, 6,3]]_{16}$\\
$5$ & $3$ & $[[17, 2, 13,9]]_{16}$ & $5$ & $3$ & $[[17, 9, 6,2]]_{16}$\\
$5$ & $4$ & $[[17, 1, 13,8]]_{16}$ & $5$ & $4$ & $[[17, 8, 6,1]]_{16}$\\
$6$ & $1$ & $[[17, 5, 12,10]]_{16}$ & $6$ & $1$ & $[[17, 10, 7,5]]_{16}$\\
$6$ & $2$ & $[[17, 4, 12,9]]_{16}$ & $6$ & $2$ & $[[17, 9, 7,4]]_{16}$\\
$6$ & $3$ & $[[17, 3, 12,8]]_{16}$ & $6$ & $3$ & $[[17, 8, 7,3]]_{16}$\\
$6$ & $4$ & $[[17, 2, 12,7]]_{16}$ & $6$ & $4$ & $[[17, 7, 7,2]]_{16}$\\
$6$ & $5$ & $[[17, 1, 12,6]]_{16}$ & $6$ & $5$ & $[[17, 6, 7,1]]_{16}$\\
$7$ & $1$ & $[[17, 6, 11,9]]_{16}$ & $7$ & $1$ & $[[17, 9, 8,6]]_{16}$\\
$7$ & $2$ & $[[17, 5, 11,8]]_{16}$ & $7$ & $2$ & $[[17, 8, 8,5]]_{16}$\\
$7$ & $3$ & $[[17, 4, 11,7]]_{16}$ & $7$ & $3$ & $[[17, 7, 8,4]]_{16}$\\
$7$ & $4$ & $[[17, 3, 11,6]]_{16}$ & $7$ & $4$ & $[[17, 6, 8,3]]_{16}$\\
$7$ & $5$ & $[[17, 2, 11,5]]_{16}$ & $7$ & $5$ & $[[17, 5, 8,2]]_{16}$\\
$7$ & $6$ & $[[17, 1, 11,4]]_{16}$ & $7$ & $6$ & $[[17, 4, 8,1]]_{16}$\\
$8$ & $1$ & $[[17, 7, 10,8]]_{16}$ & $8$ & $1$ & $[[17, 8, 9,7]]_{16}$\\
$8$ & $2$ & $[[17, 6, 10,7]]_{16}$ & $8$ & $2$ & $[[17, 7, 9,6]]_{16}$\\
$8$ & $3$ & $[[17, 5, 10,6]]_{16}$ & $8$ & $3$ & $[[17, 6, 9,5]]_{16}$\\
$8$ & $4$ & $[[17, 4, 10,5]]_{16}$ & $8$ & $4$ & $[[17, 5, 9,4]]_{16}$\\
$8$ & $5$ & $[[17, 3, 10,4]]_{16}$ & $8$ & $5$ & $[[17, 4, 9,3]]_{16}$\\
$8$ & $6$ & $[[17, 2, 10,3]]_{16}$ & $8$ & $6$ & $[[17, 3, 9,2]]_{16}$\\
$8$ & $7$ & $[[17, 1, 10,2]]_{16}$ & $8$ & $7$ & $[[17, 2, 9,1]]_{16}$\\

\hline
\end{tabular*}
      \begin{tablenotes}
        \footnotesize
        \item[1] $k_1=k-s$, $d_1=n-k+1$, $c_1=n-k-s$.
        \item[2] $k_2=n-k-s$, $d_2=k+1$, $c_2=k-s$.
      \end{tablenotes}
    \end{threeparttable}
\end{table}

\begin{table}
\caption{\footnotesize Sample parameters of MDS EAQECCs of Theorem \ref{thm42} (3) \protect \\
for $q=81$ and $n=9$}
 \label{table3}
 \centering
\begin{threeparttable}
\begin{tabular*}{8cm}{lll|lll}
\hline
$k$ & $s$  & $[[n,k_1,d_1;c_1]]_q$\tnote{1} & $k$ & $s$  & $[[n,k_2,d_2;c_2]]_q$\tnote{2}\\
\hline
$2$ & $1$ & $[[9, 1, 8,6]]_{81}$ & $2$ & $1$ & $[[9, 6, 3,1]]_{81}$\\
$2$ & $2$ & $[[9, 0, 8,5]]_{81}$ & $2$ & $2$ & $[[9, 5, 3,0]]_{81}$\\
$3$ & $1$ & $[[9, 2, 7,5]]_{81}$ & $3$ & $1$ & $[[9, 5, 4,2]]_{81}$\\
$3$ & $2$ & $[[9, 1, 7,4]]_{81}$ & $3$ & $2$ & $[[9, 4, 4,1]]_{81}$\\
$3$ & $3$ & $[[9, 0, 7,3]]_{81}$ & $3$ & $3$ & $[[9, 3, 4,0]]_{81}$\\
$4$ & $1$ & $[[9, 3, 6,4]]_{81}$ & $4$ & $1$ & $[[9, 4, 5,3]]_{81}$\\
$4$ & $2$ & $[[9, 2, 6,3]]_{81}$ & $4$ & $2$ & $[[9, 3, 5,2]]_{81}$\\
$4$ & $3$ & $[[9, 1, 6,2]]_{81}$ & $4$ & $3$ & $[[9, 2, 5,1]]_{81}$\\
$4$ & $4$ & $[[9, 0, 6,1]]_{81}$ & $4$ & $4$ & $[[9, 1, 5,0]]_{81}$\\
\hline
\end{tabular*}
      \begin{tablenotes}
        \footnotesize
        \item[1] $k_1=k-s$, $d_1=n-k+1$, $c_1=n-k-s$.
        \item[2] $k_2=n-k-s$, $d_2=k+1$, $c_2=k-s$.
      \end{tablenotes}
    \end{threeparttable}
\end{table}

\begin{table}
\caption{\footnotesize Sample parameters of MDS EAQECCs of Theorem \ref{thm43} (2) \protect \\
for $q=27$ and $n=13$}
 \label{table4}
 \centering
\begin{threeparttable}
\begin{tabular*}{8cm}{lll|lll}
\hline
$k$ & $s$  & $[[n,k_1,d_1;c_1]]_q$\tnote{1} & $k$ & $s$  & $[[n,k_2,d_2;c_2]]_q$\tnote{2}\\
\hline
$2$ & $1$ & $[[13, 1, 12,10]]_{27}$ & $2$ & $1$ & $[[13, 10, 3,1]]_{27}$\\
$3$ & $1$ & $[[13, 2, 11,9]]_{27}$ & $3$ & $1$ & $[[13, 9, 4,2]]_{27}$\\
$3$ & $2$ & $[[13, 1, 11,8]]_{27}$ & $3$ & $2$ & $[[13, 8, 4,1]]_{27}$\\
$4$ & $1$ & $[[13, 3, 10,8]]_{27}$ & $4$ & $1$ & $[[13, 8, 5,3]]_{27}$\\
$4$ & $2$ & $[[13, 2, 10,7]]_{27}$ & $4$ & $2$ & $[[13, 7, 5,2]]_{27}$\\
$4$ & $3$ & $[[13, 1, 10,6]]_{27}$ & $4$ & $3$ & $[[13, 6, 5,1]]_{27}$\\
$5$ & $1$ & $[[13, 4, 9,7]]_{27}$ & $5$ & $1$ & $[[13, 7, 6,4]]_{27}$\\
$5$ & $2$ & $[[13, 3, 9,6]]_{27}$ & $5$ & $2$ & $[[13, 6, 6,3]]_{27}$\\
$5$ & $3$ & $[[13, 2, 9,5]]_{27}$ & $5$ & $3$ & $[[13, 5, 6,2]]_{27}$\\
$5$ & $4$ & $[[13, 1, 9,4]]_{27}$ & $5$ & $4$ & $[[13, 4, 6,1]]_{27}$\\
$6$ & $1$ & $[[13, 5, 8,6]]_{27}$ & $6$ & $1$ & $[[13, 6, 7,5]]_{27}$\\
$6$ & $2$ & $[[13, 4, 8,5]]_{27}$ & $6$ & $2$ & $[[13, 5, 7,4]]_{27}$\\
$6$ & $3$ & $[[13, 3, 8,4]]_{27}$ & $6$ & $3$ & $[[13, 4, 7,3]]_{27}$\\
$6$ & $4$ & $[[13, 2, 8,3]]_{27}$ & $6$ & $4$ & $[[13, 3, 7,2]]_{27}$\\
$6$ & $5$ & $[[13, 1, 8,2]]_{27}$ & $6$ & $5$ & $[[13, 2, 7,1]]_{27}$\\
\hline
\end{tabular*}
      \begin{tablenotes}
        \footnotesize
        \item[1] $k_1=k-s$, $d_1=n-k+1$, $c_1=n-k-s$.
        \item[2] $k_2=n-k-s$, $d_2=k+1$, $c_2=k-s$.
      \end{tablenotes}
    \end{threeparttable}
\end{table}

\section{Concluding remarks}
In this paper, we proposed several infinite families of MDS codes with arbitrary dimensional hull by using GRS codes and extended GRS codes. With the MDS codes constructed in Section 3, we presented several infinite families of MDS EAQECCs whose required number of maximally entangled states is flexible. The study of hull of linear codes is an interesting problem in coding theory. We believe that there are some other potential applications in coding theory.


\begin{thebibliography}{1}

\bibitem{Hull}Jr. E.F. Assmus and J. Key, Designs and Their Codes. Cambridge University Press, Cambridge (1992).
Cambridge Tracts in Mathematics, vol.103 (Second printing with corrections, 1993).

\bibitem{Brun}T. Brun, I. Devetak and M.H. Hsieh, ``Catalytic quantum error correction,'' IEEE Trans. Inf. Theory, vol. 60, pp. 3073–
3089, 2014.

\bibitem{C1} T. Brun, I. Devetak and M.H. Hsieh, ``Correcting quantum errors with entanglement,'' Science, vol. 314, pp. 436–439, 2006.

\bibitem{Carlet}C. Carlet, S. Mesnager, C. Tang, Y. Qi, and R. Pellikaan, ``Linear codes
over $\Ff_q$ are equivalent to LCD codes for $q > 3$,'' IEEE Trans. Inform.
Theory, Doi: 10.1109/TIT.2017.2748955, 2017.

\bibitem{Chenbc} B. Chen and H. Liu, ``New constructions of MDS codes with
complementary duals,'' IEEE Trans. Inf. Theory, Doi: 10.1109/TIT.2017.2748955, 2017.

\bibitem{CSS1} A. Calderbank and P. Shor, ``Good quantum error-correcting codes exist,'' Phys. Rev. A, vol. 54, pp. 1098–1105, 1996.

\bibitem{CC2} J. Chen, Y. Huang, C. Feng and R. Chen, ``Entanglement-assisted quantum MDS codes
constructed from negacyclic codes,'' Quantum Inf. Process. 16: 303, https://doi.org/10.1007/s11128-017-1750-4, 2017.

\bibitem{QIP} I. Djordjevic, Quantum Information Processing and Quantum Error
Correction: An Engineering Approach. Waltham, MA, USA: Academic
press, 2012.

\bibitem{EAQ1} M.H. Hsich, I. Devetak, T. Brun, ``General entanglement-assisted quantum error-correcting codes,'' Phys.
Rev. A, 76, 062313, 2007.

\bibitem{EAQ}K. Guenda, S. Jitman, T A. Gulliver, ``Constructions of good entanglementassisted
quanutm error cottecting codes,'' Des. Codes Cryptogr., vol. 86, pp. 121-136, 2018.

\bibitem{Jin}L. Jin, ``Construction of MDS codes with complementary duals,'' IEEE
Trans. Inf. Theory, vol. 63, no. 5, pp. 2843-2847, 2017.

\bibitem{Mac}F. J. MacWilliams and N. J. A. Sloane, The Theory of Error-Correcting
Codes, North-Holland, Amsterdam, 1977.

\bibitem{CC3}L. Lu, R. Li, L. Guo, Y. Ma and Y. Liu, ``Entanglement-assisted quantum MDS codes from
negacyclic codes,'' Quantum Inf. Process, 17: 69, https://doi.org/10.1007/s11128-018-1838-5, 2018.

\bibitem{A1}J.S. Leon, ``An algorithm for computing the automorphism group of a Hadamard matrix,'' J. Comb. Theory, Ser. A, vol. 27, no. 3, pp. 289–306, 1979.

\bibitem{A3}J.S. Leon, ``Computing automorphism groups of error-correcting codes,'' IEEE Trans. Inf. Theory, vol. 28, no. 3, pp. 496–511, 1982.

\bibitem{A4}J.S. Leon, ``Permutation group algorithms based on partition,'' I: theory and algorithms, J. Symb. Comput, vol. 12, pp. 533–583, 1991.

\bibitem{A2}E. Petrank and R.M. Roth, ``Is code equivalence easy to decide?,'' IEEE Trans. Inf. Theory, vol. 43, no. 5, pp. 1602–1604, 1997.

\bibitem{CC1} J. Qian and L. Zhang, ``On MDS linear complementary dual codes and entanglement-assisted quantum codes,'' Des. Codes Cryptogr., vol. 86, no. 7, pp. 1565-1572, 2018.


\bibitem{R1}N. Sendrier, ``On the dimension of the hull,'' SIAM J. Appl. Math., pp. 282–293, 1997.

\bibitem{A5}N. Sendrier, ``Finding the permutationbetween equivalent binary code,'' in: Proceedings of IEEE ISIT’1997, Ulm, Germany, 1997, pp. 367.

\bibitem{A6}N. Sendrier, ``Finding the permutation between equivalent codes: the support splitting algorithm,'' IEEE Trans. Inf. Theory, vo. 46, no. 4, pp. 1193–1203, 2000.

\bibitem{R2} G. Skersys, ``The average dimension of the hull of cyclic codes,'' Discrete Appl. Math., vol. 128, no. 1, pp. 275–292, 2003.

\bibitem{R3} E. Sangwisuta, S. Jitmanb, S. Ling and P. Udomkavanicha, ``Hulls of cyclic and negacyclic codes over finite fields,'' Finite Fields Appl., vol. 33, pp. 232-257, 2015.

\bibitem{CSS2} A. Steane, ``Error-correcting codes in quantum theory,'' Phys. Rev. Lett., vol. 77, pp. 793–797, 1996.

\bibitem{EAQ2} M. Wilde and T. Brun, ``Optimal entanglement formulas for entanglement-assisted quantum coding,''
Phys. Rev. A, 77, 064302, 2008.






\end{thebibliography}
\end{document}